\documentclass[11pt,3p]{scrartcl}
 
\usepackage[hidelinks]{hyperref}

\usepackage[x11names]{xcolor}%

\usepackage{amsmath,amssymb,amsfonts}%

\usepackage{chngcntr}
 
\counterwithin{figure}{section}
\counterwithin{table}{section}

\usepackage{textcomp}%
\usepackage{manyfoot}%
\usepackage{booktabs}%
\usepackage{listings}%

\usepackage{hyperref}
\usepackage[centerdot]{mathtools}
\usepackage[T1]{fontenc}
\usepackage[utf8]{inputenc}
\usepackage{lmodern}
\usepackage[english]{babel}

\usepackage{amsthm}
\usepackage[]{mathtools}
\usepackage{dsfont}
\newcommand{\R}{\mathds{R}}
\newcommand{\Z}{\mathds{Z}}

\newcommand{\X}{\mathcal{X}}
\newcommand{\Y}{\mathcal{Y}}
\newcommand{\Yn}{\mathcal{Y}_N}

\usepackage{natbib}
\AtEndOfPackage{%
   \renewcommand\@openbib@code{%
      \advance\leftmargin\bibindent
      \itemindent -\bibindent
      \listparindent \itemindent
      \parsep \z@
      }%
   \renewcommand\newblock{\par}}%

\AtEndOfClass{\RequirePackage{natbib}%
\setlength{\bibhang}{\parindent}%

\def\@biblabel#1{#1.}%
\bibpunct{(}{)}{;}{a}{}{,}}

\usepackage{pifont}%

\DeclareMathOperator{\conv}{conv}

\usepackage[ruled,noend]{algorithm2e}
\SetKw{Continue}{continue}
\SetKw{Null}{null}
\SetKw{Break}{break}

\SetCommentSty{mycommfont}

\usepackage[normalem]{ulem}
\usepackage{float}
\usepackage{subcaption}
\usepackage{authblk}
\usepackage{rotating}

\usepackage{tikz}
\usepackage{tikz,tikz-3dplot}
\tikzset{
    cross/.pic = {
    \draw[rotate = 45] (-#1,0) -- (#1,0);
    \draw[rotate = 45] (0,-#1) -- (0, #1);
    }
}

\usetikzlibrary{shapes.misc}
\usetikzlibrary{shapes.geometric}
\usetikzlibrary{patterns}
\usetikzlibrary{graphs,quotes}

\usepackage{pgfplots}
 \pgfplotsset{compat=1.16,
    every axis/.append style={
        axis lines=center,
        xlabel style={anchor=south west},
        ylabel style={anchor=south west},
        zlabel style={anchor=south west},
        tick align=outside,
        every tick label/.append style={font=\scriptsize}}
}
 \usepgfplotslibrary{patchplots}
\tdplotsetmaincoords{60}{120}
\usepgfplotslibrary{fillbetween}

\usepackage[]{cleveref} 

\newtheoremstyle{theorem}
{15pt}
{5pt}
{\itshape}
{}
{\bfseries}
{}
{0.5em}
{}
\theoremstyle{theorem}

\usepackage{xpatch}

\xpatchcmd{\proof}{\itshape}{\prooflabelfont}{}{}
\newcommand{\prooflabelfont}{\bfseries}

\newtheorem{theorem}{Theorem}

\newtheorem{definition}[theorem]{Definition}

\title{{Representing the Non-dominated Set 
of Multi-objective Network Problems 
by Supported Non-dominated Points}}

\author[1,2]{David K\"onen}
\author[1,3]{Lara L\"ohken}
\author[1,4]{Michael Stiglmayr}
\affil[1]{%
	University of Wuppertal\\
	School of Mathematics and Natural Sciences\\
	Optimization Group\\
	Gaußstraße 20, 42103 Wuppertal, Germany\\ 
}

\affil[2]{Corresponding author, E-Mail:~\href{mailto:koenen@uni-wuppertal.de}{koenen@uni-wuppertal.de}, ORCID-ID: 0000-0003-1747-8791}
\affil[3]{E-Mail:~\href{mailto:loeken@uni-wuppertal.de}{loehken@uni-wuppertal.de}, ORCID-ID: 0009-0003-9205-6641}
\affil[4]{E-Mail:~\href{mailto:stiglmayr@uni-wuppertal.de}{stiglmayr@uni-wuppertal.de}, ORCID-ID: 0000-0003-0926-1584 }

\date{}
\begin{document}

\maketitle

	\begin{abstract}\small
    In multi-objective combinatorial optimization, unsupported non-dominated points typically outnumber supported points and are often significantly more challenging to compute. Recent studies show that extreme supported non-dominated points provide high-quality representations of the non-dominated set for certain binary problems. We demonstrate that this observation does not generalize to capacitated network optimization problems: representation quality decreases with increasing arc capacities, whereas supported non-dominated points consistently provide high-quality representations with respect to several quality indicators. However, supported point sets may still be too large in practical applications, where only a small, fixed number of alternatives is typically desired. Selecting fixed-size representations from the non-dominated set requires its computationally expensive generation and thus diminishes the computational advantages that representations are intended to provide. We therefore suggest the (extreme) supported points as alternative candidate sets in subset selection problems. Our numerical results show that restricting the candidate set to supported non-dominated points yields fixed-size representations of nearly the same quality as those selected from the complete non-dominated set. Overall, supported non-dominated points serve both as high-quality representations and as reasonable candidate sets for subset selection.
\end{abstract}

\par\vskip\baselineskip\noindent
\textbf{Keywords:} multi-objective optimization | combinatorial optimization | representation | supportedness | network optimization | minimum-cost flow problem

\section{Introduction}\label{sec:Intro}

Network optimization problems are fundamental optimization problems, arising when a commodity moves through an underlying network, creating a flow of resources or information. Nowadays, such networks appear in various forms in our daily lives. The textbook by~\cite{Ahuja1993} presents over 150 applications of network optimization problems across various fields, such as engineering, management, and other scientific domains. 
While single-objective variants have been studied since the early 1950s~\citep[see][and the references given therein]{Ahuja1993,bertsekas98network}, real-world applications, however, often involve conflicting objectives.

These problems belong to the general class of multi-objective integer linear problems (MOILP), defined as

\begin{equation}\label{eq:MOCO}
        \tag{MOILP}
     \min_{x\in\X}  z(x)=\left(z_1(x),\ldots,z_p(x)\right)^{\top} = Cx,     
\end{equation}
with the cost matrix $C \in \mathds{Z}^{p \times n}$ containing the rows $c^k$ of coefficients of $p\geq 2$
linear objective functions $z_k(x)=c^k x$ for  $k \in \{1, \ldots,p \}$ and the feasible set $\X \coloneqq \{ x  \in  \Z^n \colon Ax = b \}$, where $A \in \mathds{Z}^{m\times n}, b \in\mathds{Z}^m$ describe the $m$ constraints.  If $x\in \{0,1\}^n$ these problems are also referred to multi-objective combinatorial optimization (MOCO) problems. The set of feasible outcome vectors in the objective space is denoted by $\Y \coloneqq   \{C\, x \colon x\in \X \}$, where $\X$ is the set of all feasible solutions.

In multi-objective optimization, it is usually assumed that the objective functions are conflicting, which implies that no solution minimizes all objectives simultaneously. In this situation, we use the \emph{Pareto concept of optimality}, based on the component-wise orders in $\R^p$. Let $y^1, y^2 \in \mathds{R}^p$ then
\begin{align*}
& y^1 \leqq y^2 \iff y_k^1 \leq y^2_k \quad \forall \; k \in \{1,\dots ,p\} \\
& y^1 \leqslant y^2 \iff y^1 \leqq y^2 \quad \text {and } \quad y^1 \neq y^2\\
& y^1 < y^2 \iff y_k^1 < y^2_k \quad \forall \; k \in \{1,\dots ,p\}  
\end{align*}

Accordingly, we define the non-negative orthant denoted by $\mathds{R}_{\geqq}^p:=\left\{y \in \mathds{R}^p: y \geqq 0\right\}$, and analogously by $\R_{>}^p$ and $\R_{\geqslant}^p$.

\begin{definition}
    A feasible solution $x^{*} \in \X$ is \emph{efficient} if there is no feasible solution $x \in \X$ such that $f(x) \leqslant f\left(x^{*}\right)$. If $x^{*}$ is efficient, $ f\left(x^{*}\right)$ is called a \emph{non-dominated point}. If $f(x) \leqslant f\left(x^{\prime}\right)$ holds for two feasible solutions $x, x^{\prime} \in \X$, then $x$ is said to \emph{dominate} $x^{\prime}$, and $f(x)$ \emph{dominates} $f\left(x^{\prime}\right)$. 
    The set of efficient solutions is denoted by $\X_E\subseteq \X$ and the set of non-dominated points by $\mathcal{Y}_N\subseteq \Y$.
    \end{definition}

Assuming that $\Yn$ is bounded, we can define lower and upper component-wise bounds for all non-dominated points by the \emph{ideal point} $y^I$ with $y_k^I \coloneqq \min_{y \in \Y} y_k $ and the \emph{Nadir point} $y^N$ with $y_k^N \coloneqq \max_{y \in \Yn} y_k$ for $k \in \{1, \dots, p\}$.

\paragraph{Scalarization and Supported Solutions} 
The most common solutions approaches for multiobjective optimization problems rely on scalarizations. Thereby, the multi-objective problem is transformed into a sequence of single-objective optimization problems whose solutions correspond to efficient solutions of the original problem. For a comprehensive introduction, we refer e.\,g.\ to~\cite{ehrgott2005multicriteria}. 

The most straightforward scalarization approach is the \emph{weighted sum method}, where a convex combination of the objectives is minimized for varying weighting vectors. Let $\|x\|_{1}$ denote the $1$-norm of $x \in \R^{d}$, i.\,e.,  $\|x\|_{1}\coloneqq \sum_{i=1}^{d}\left|x_{i}\right|$. The set of \emph{normalized weighting vectors} is defined as 
$\Lambda_p=\{\lambda \in \R^{p}_{>} \colon \|\lambda\|_{1} = 1 \}$, and 
$\Lambda_p^0=\{\lambda \in \R^{p}_{\geq} \colon \|\lambda\|_{1} = 1 \}$ 
if weights equal to zero are included.
Then, the associated \emph{weighted-sum scalarization} with $\lambda \in \Lambda_p$ or $\lambda \in \Lambda_p^0$ is defined as the parametric program 
\begin{equation}
    \tag{$P_{\lambda}$}\label{eq:wsm}
    \min_{x\in \X} \lambda^\top z(x).
\end{equation}

An efficient solution is called \emph{a supported efficient solution} if it is an optimal solution to the weighted sum scalarization~\ref{eq:wsm} for  $\lambda\in \Lambda_{d}$, i.\,e., an optimal solution to a single-objective weighted sum problem where the weights are strictly positive. Its image is called \emph{supported non-dominated point}; we use the notation $\X_{S}$ and $\Y_{S}$. 
It is known that MOILPs and, more generally, nonconvex multi-objective problems can contain \emph{unsupported non-dominated points} which correspond to efficient solutions for which there is no $\lambda \in \Lambda_{p}^0$ such that they are optimal for~\ref{eq:wsm}. Unsupported non-dominated points are located in the interior of the \emph{upper image} $\mathcal{P}\coloneqq  \operatorname{conv}(\Y)+ \smash{\R_\geqq^p}$. There may also exist \emph{weakly supported efficient solutions} that are optimal solutions of $P_{\lambda}$ for  $\lambda\in \Lambda_{p}^0$, i.\,e., an optimal solution to a weighted sum single-objective problem with non-negative weights. Their images in the objective space are weakly \emph{supported non-dominated points}. The distinction between supported and weakly supported solutions is relatively new, first proposed in~\cite{koenen2025}, to provide a consistent definition of supportedness. \emph{Extreme supported solutions}  are those solutions whose images lie on the vertex set of the upper image. Their images are called \emph{extreme supported non-dominated vectors}, denoted by $\Y_{ES}$ and $\X_{ES}$ for their preimages.

\paragraph{Motivation and Related Literature}
MOCO problems generally belong to the class of computationally intractable problems, i.\,e., the number of non-dominated points can grow exponentially with the size of the instance. This results in high computational effort, particularly for high-dimensional problems or large instances.
To address this challenge, researchers explore representation techniques designed to extract a small subset of $\mathcal{Y}_N$ to present to a decision maker. There are several concepts to measure the quality of such a representation (in terms of size, non-redundancy, and coverage), see, 
e.\,g.,~\cite{Sayin2000} and~\cite{HUMBERTROPERS2025}. Particularly for MOCO problems, there are various publications focusing on representation or approximation methods based on different quality indicators (see, e.\,g.,~\citealt{hamacher07,vaz15representation, rios2021, mesqui2022, Serpil2024, Kirlik2025, HUMBERTROPERS2025}). For a comprehensive survey of representations in multi-objective optimization and related methods, we refer to~\cite{Faulkenberg2010} and~\cite{Herzel2021}.  

A recent study by~\citet{Serpil2024} demonstrates that the set of supported non-dominated points yield high-quality representations for multi-objective discrete optimization problems. It is further shown that $|\Y_{S}|$ and $|\Y_{ES}|$ coincide for all evaluated knapsack problem instances and are nearly identical for all (binary) assignment problems. Thus, the extreme supported points alone are sufficient to provide a high-quality representation for these problem classes.
This result can be of large practical importance, as (extreme) supported non-dominated points are usually calculated relatively efficiently compared to unsupported points, which, moreover, typically outnumber them by far (see, e.\,g.,~\citealt{Visee1998,raith09, figueira17easy}). 

Several studies focus on identifying or analyzing the supported or extreme supported point set across various network flow or graph optimization problems, including bi- and multi-objective integer network flow problems~\citep{eusebio14,Church2015,raith17,koenen2023outputpolynomial,koenen2023outputsensitive}, bi- and multi-objective minimum spanning trees~\citep{sourd2006multi,silva07note, Ruzika2009, Correia2021}, bi- and multi-objective shortest path problems~\citep{EDWIN99,SEDENONODA15,church14}, (bi-objective) cable-trench problem~\citep{Vasko2002, Vasko2015, Loehken2025} and bi- and multi-objective assignment problems~\citep{tuyttens2000performance,gandibleux2003use,PRZYBYLSKI2010149}. Additional work on general MOCO problems can be found in~\cite{gandibleux01},~\cite{jesus2015implicit}, and \cite{Serpil2024} among others. However, there is no research evaluating the \emph{representation quality} of the set of supported non-dominated points for network optimization problems. 

This leads to our first central research question: \emph{Does the observation that extreme supported points provide a sufficiently high-quality representation generalize to network optimization problems, particularly those with higher arc capacities?}

\paragraph{Contribution} 
In this work, we demonstrate, as a first contribution, that the observations in~\citet{Serpil2024} do not generalize to all network optimization problems. 
Indeed, for binary network optimization problems, e.\,g., the multi-objective minimum spanning tree problem, we obtain results similar to those reported in~\cite{Serpil2024}, namely that $|\Y_{S}|$ and $|\Y_{ES}|$ are nearly identical, and $\Y_{ES}$ (alone) provides a high-quality representation. However, this behavior does not extend to network problems with higher arc capacities, such as the multi-objective minimum cost flow problem. The number of supported non-dominated points significantly exceeds the number of extreme supported points for such capacitated problems, and the set $\Y_{S}$ constitutes a representation of much higher quality with respect to various quality indicators than the set $\Y_{ES}$.

Despite $\Y_{S}$ or $\Y_{ES}$ offering high-quality representations, their cardinalities may still be too large, as decision makers typically expect a small and manageable subset of alternatives (e.\,g., 5--20 points). 
In such cases, a posteriori methods are commonly used to derive fixed-size representations by solving a subset selection problem with respect to a specific quality indicator. These approaches are typically based on the explicit computation of the entire non-dominated set, from which a representative subset is chosen. However, this largely eliminates the computational advantages that representations are intended to provide. To bypass this bottleneck, we propose restricting the candidate pool to the easier-to-compute sets $\Y_{S}$ or $\Y_{ES}$. This motivates our second research question: \emph{How much representation quality is lost when the candidate pool for extracting a fixed-size representation of size $k$ is restricted to $\Y_{S}$ or $\Y_{ES}$ instead of the full non-dominated set $\Y_{N}$?}

As a second contribution, we address this question by analyzing the loss in representation quality with respect to different quality indicators. We demonstrate that the representation obtained from the supported set is nearly as good as one selected from the complete non-dominated set, offering a massive reduction in computational effort with a minor loss in quality. 

We present numerical experiments in which various quality metrics, such as coverage error, hypervolume ratio, and $\varepsilon$-indicator, are used to assess the quality of representational subsets, thereby demonstrating our findings. All test instances and implementations are publicly available in the associated repository~\citep{koenen2026supprepresentations} to ensure reproducibility and to facilitate further research, which constitutes our third contribution, particularly with regard to open science.

This paper is organized as follows. In~\Cref{sec:NetworkProb}, we introduce the considered network optimization problems and discuss related literature.~\Cref{sec:Rep} presents the mathematical definitions of different quality measures for representational subsets. In~\Cref{sec:RepQualSuppNonPoint}, we provide a numerical evaluation of the quality of the set of (extreme) supported non-dominated points as representations for the considered graph problems, followed by the analysis of fixed-size representations in~\Cref{sec:FixedSize}. The results are summarized and concluded in~\Cref{chapt:concl}.

\section{Network Optimization Problems} 
\label{sec:NetworkProb}

 We distinguish two classes of network optimization problems which we investigate separately:

\begin{description}
    \item[Binary and Low-Capacity Structures] Problems where feasible solutions are defined by binary decision variables (e.\,g., selecting edges for a tree or path) or where arc capacities are restricted to one or other small values. For these structures, we hypothesize that the set of extreme supported non-dominated points ($\Y_{ES}$) is sufficient for a high-quality representation of $\Y_{N}$.
    \item[Capacitated Flow Structures] General flow problems where arc capacities allow for integer flows significantly greater than one. Here, we hypothesize that non-extreme supported points play a crucial role in high-quality representation.
\end{description}

\subsection{Binary and Low-Capacity Problems}\label{subsec:BinProb}

\paragraph{Minimum Spanning Tree Problem}
The minimum spanning tree problem asks for a spanning tree of a weighted, connected, undirected graph that minimizes the sum of its edge weights. For a comprehensive overview, we refer to the textbook by~\citet{Ahuja1993}. For the multi-objective variant, we refer to~\citet{ehrgott00survey}.

\begin{definition}[Multi-objective Minimum Spanning Tree]
    Given a graph $G=(V,E)$ and a cost function $c\colon E \to \Z^p$ that assigns a cost vector $c(e) = ( c_1(e), \dots, c_p(e))^\top$ to each edge $e \in E$. Then the \emph{multi-objective minimum spanning tree} problem can be written as
    \begin{equation}\label{eq:MOMST}
        \tag{MOMST}
     \min_{T\in\mathcal{T}}  c(T)=\left(c_1(T),\ldots,c_p(T)\right)^{\top}   
\end{equation}
    where $\mathcal{T}$ is the set of all spanning trees of $G$, and $c_i(T) = \sum_{e\in T} c_i(e)$ for $i=1,\ldots,p$ are the $p$ linear objective functions.
\end{definition}

The~\ref{eq:MOMST} is known to be intractable~\citep{hamacher1994spanning}. Moreover, the associated decision problem is $\mathbf{NP}$-hard, even in the case of only two objectives~\citep{Camerini1984}. However, the number of extreme supported non-dominated points of~\ref{eq:MOMST} is polynomially bounded~\citep{Seipp2013}.

\paragraph{Shortest Path Problem}
The shortest path problem aims to find a path between two predefined nodes in a weighted, directed graph that minimizes the sum of the arc weights along the path. For a comprehensive overview, we refer again to the textbook by~\citet{Ahuja1993}, and for the multi-objective variant, we refer to~\citet{ehrgott00survey}.

\begin{definition}[Multi-objective Shortest Path Problem]

Given a directed graph $D = (V, A)$, a cost function $c\colon A \to \mathbb{Z}^p_{\geq}$ that assigns a cost vector $c(a) = ( c_1(a), \dots, c_p(a))^\top $ to each arc $a \in A$, a source node \(s \in V\), and a target node \(t \in V\), the  \emph{multi-objective shortest path problem} is defined by

\begin{equation}\label{eq:MOSP}
        \tag{MOSP}
     \min_{P\in\mathcal{P}_{s,t}} c(P)=\left(c_1(P),\ldots,c_p(P)\right)^{\top}     
\end{equation}
    where \(\mathcal{P}_{s,t}\) is the set of all dipaths from \(s\) to \(t\) in \(D\), and $c_i(P) = \sum_{a\in P} c_i(a)$ for all $i=1,\ldots,p$. 
    
\end{definition}

The decision version of the multi-objective shortest path problem is $\mathbf{NP}$-hard~\citep{serafini87some}, and moreover,~\ref{eq:MOSP} is intractable~\citep{Hansen1979}. Unlike ~\ref{eq:MOMST}, even the number of extreme supported non-dominated points can grow exponentially with the size of an instance, see~\cite{Ruhe1988}. Moreover, all of these complexity results also hold for the bi-objective case. 

\paragraph{Cable-trench Problem} The cable-trench problem and its bi-objective formulation are a combination of the shortest path and the minimum spanning tree problem. The goal is to find a spanning tree of a weighted, connected graph that simultaneously minimizes its total cost and the total path costs from a predefined source node to all other nodes in the graph. The cable-trench problem is introduced as a single-objective optimization problem, whose objective function is a parameterized sum of spanning tree and shortest path objectives~\citep{Vasko2002, Vasko2015}. A bi-objective approach to the cable-trench problem can be formulated by considering spanning tree and shortest path objectives as two individual objective functions that are minimized separately. Indeed, the cable-trench problem in its original version corresponds to a weighted sum scalarization of its bi-objective version. We refer to~\cite{Loehken2025} for further reading.

\begin{definition}[Bi-objective Cable-trench Problem]
Given a graph $G=(V,E)$ with specified source node  $v_0 \in V$ and a cost function $c\colon E \rightarrow \R$ that assigns a cost coefficient $c(e)$ to each edge. The \emph{bi-objective cable-trench problem} is given by
\begin{equation}\label{eq:BCTP}
        \tag{BCTP}
     \min_{T\in\mathcal{T}}  c(T)=\left(c_{\gamma}(T),c_{\tau}(T)\right)^{\top}      
\end{equation}

where $\mathcal{T}$ denotes the set of all spanning trees of $G$, $ c_{\gamma}(T)$ the total path cost of all paths from $v_0$ to all other nodes of $V$ in $T$ and $c_{\tau}(T)$ the total cost of the spanning tree $T$.
    
\end{definition}

Assuming non-proportional cost coefficients with respect to the two objective functions allows us to define the generalized (bi-objective) cable-trench problem~\citep{Vasko2002, Loehken2025}. 

\begin{definition}[Generalized Bi-objective Cable-trench Problem]

Let $c\colon E \rightarrow \R^2$ be a cost function that assigns two cost coefficients $s(e)$ and $t(e)$ to each edge, i.\,e., $c(e)=(s(e), t(e))^\top$. The \emph{generalized bi-objective cable-trench problem} can be written as 

\begin{equation}\label{eq:BGCTP}
        \tag{BGCTP}
     \min_{T\in\mathcal{T}}  c(T)=\left(s_{\gamma}(T),t_{\tau}(T)\right)^{\top}      
\end{equation}

where $s_{\gamma}(T)$ denotes the total path cost, determined by edge costs $s(e)$, and $t_{\tau}(T)$ denotes the total cost of the spanning tree $T$, determined by the cost coefficients $t(e)$.
\end{definition}

The bi-objective (generalized) cable-trench problem is intractable, and the associated decision problem is $\mathbf{NP}$-hard~\citep{Loehken2025}. 

\subsection{Capacitated Flow Problems}\label{subsec:CapFlowProb}

\paragraph{Minimum Cost Flow Problem}
In the minimum cost flow problem, the goal is to transport a predefined amount of flow from supply to demand nodes through a given network such that the transportation costs are minimized. Usually, such networks can be described by weighted, directed graphs with capacity restrictions. In fact, the shortest path problem is a special case of the minimum cost flow problem, as source nodes can be described as supply nodes and target nodes as demand nodes. For an in-depth overview of the single-objective minimum cost flow problem, we refer to the textbook by~\citet{Ahuja1993}. For the multi-objective setting, we refer to~\citet{Hamacher2007}.

\begin{definition}[Multi-objective Minimum Cost Flow Problem]
Given a directed graph $D=(V,A)$ and a cost function $c\colon A \to \mathbb{Z}^p$ that assigns a $p$-dimensional cost vector $c(a) = ( c_1(a), \dots, c_p(a) )^\top$ to each arc $a \in A$. Let $u(a)$ denote integer-valued, non-negative, finite upper capacity bounds, and $f(a)$ an integer-valued flow, respectively, for each arc $a \in A$. Moreover, given a function $b \colon V \to \Z$ that assigns a supply/demand value $b(v)$ to each node $v \in V$, such that $\sum_{v \in V} b(v) = 0$. Let $f^{\text{\emph{\,in}}}(v)$ and $f^{\text{\emph{\,out}}}(v)$ denote the total incoming and total outgoing flow of each node $v \in V$, respectively. Then, the \emph{multi-objective integer minimum cost flow problem} is given by

\begin{equation}\label{eq:MOIMCF}\tag{MOIMCF}
    \begin{aligned}
        \min \quad & c(f) = \left( c_1(f), \dots, c_p(f) \right)^{\top} \\
        \text{s.\,t.} \quad & f^{\text{out}}(v) - f^{\text{in}}(v) = b(v) && \forall v \in V \\
        & 0 \leq f(a) \leq u(a) && \forall a \in A \\
        & f(a) \in \Z_{\geqq} && \forall a \in A
    \end{aligned}
\end{equation}
with $c_i(f) = \sum_{a \in A} f(a)\,c_i(a)$ for $i = 1, \dots, p$.  
    
\end{definition}

Nodes are classified as supply nodes if $b(v)>0$, demand nodes if $b(v)<0$, and transshipment nodes if $b(v)=0$. An illustrative example of a bi-objective integer minimum cost flow problem and the corresponding outcome space, including the visualization of different non-dominated point sets, is given in Figure~\ref{fig:biExample}.

Since the multi-objective shortest path problem is a special case of~\ref{eq:MOIMCF}, it immediately follows that~\ref{eq:MOIMCF} is intractable and the decision version of it is $\mathbf{NP}$-hard. It also follows that the number of extreme supported non-dominated points can grow exponentially with the number of vertices in the network. However,~\citet{raith09} show that in integer bi-objective network flow problems, the number of unsupported non-dominated points is typically much larger than that of supported or extreme supported solutions, making efficient representation methods for $\Y_N$ even more relevant. While~\citet{eusebio14} investigate representations for the bi-objective case, corresponding approaches for general multi-objective integer minimum cost flow problems remain rare. Existing work mainly considers restricted subsets of efficient solutions rather than representative sets optimized with respect to a specific quality measure~\citep{Hamacher2007}.

As noted by~\citet{Schrijver2003}, feasible solutions of network flow problems differ only along residual cycles or, particularly in the case of a so-called \emph{tree solution} (i.\,e.\ basic feasible solution), only in a linear number of induced cycles~\citep{KONEN2022333}. This affects the number of supported non-dominated points. Transitioning from one extreme efficient solution to another, where the corresponding extreme supported non-dominated points are adjacent, might be achieved by augmenting flow along a single residual cycle. While this augmentation may affect only one cycle, the corresponding residual capacity of this cycle might be large. As a result, augmenting flow incrementally along this cycle can generate a substantial number of supported non-dominated points on the face connecting these two extreme points. Additionally, other combinations of residual cycles can yield further supported non-dominated points on the same face. This contrasts with binary problems where such \textit{intermediate} integer solutions often do not exist or are sparse. Consequently, we expect the set of supported points $\Y_S$ to be significantly larger and denser than $\Y_{ES}$ in instances with high capacities, a distinction that is a central point of the analysis in Chapter~\ref{sec:RepQualSuppNonPoint}.

\definecolor{cost1}{RGB}{140, 50, 170}
\definecolor{cost2}{RGB}{0, 128, 128}

\begin{figure}[htb]
\begin{minipage}{0.48\textwidth}
	\centering
	  	\tikzstyle{vertex}=[circle,fill=white,draw=black,minimum size=20pt,inner sep=0]
  	\tikzstyle{edge} = [draw,thick,->]
  	\tikzstyle{weight} = []
  	\tikzstyle{selected vertex} = [vertex, fill=red!24]
  	\tikzstyle{selected edge} = [draw,line width=5pt,-,green!50]
  	\usetikzlibrary{arrows,automata}
  	\pgfdeclarelayer{background}
  	\pgfsetlayers{background,main}
  	\begin{tikzpicture}[->,shorten >=1pt,auto,node distance=2.8cm,
  	semithick]
  	\foreach \pos/\name in {{(0,0)/1}, {(1.5,1.5)/2}, {(1.5,-1.5)/3},
  		{(3.5,1.5)/4}, {(5,0)/5}}
  	\node[vertex] (\name) at \pos {$\name$};

    \path (1) edge[->,thick] node[sloped] {\scriptsize(10,\color{cost1}3\color{black},\color{cost2}5\color{black})} (2)
    (1) edge[->,thick] node[sloped,below] {\scriptsize(5,\color{cost1}8\color{black},\color{cost2}1\color{black})} (3)
    (2) edge[->,thick] node[sloped] {\scriptsize(4,\color{cost1}5\color{black},\color{cost2}5\color{black})} (3)
    (2) edge[->,thick] node[sloped] {\scriptsize(7,\color{cost1}3\color{black},\color{cost2}9\color{black})} (4)
    (3) edge[->,thick] node[sloped,below] {\scriptsize(8,\color{cost1}2\color{black},\color{cost2}7\color{black})} (4)
    (3) edge[->,thick] node[sloped,below] {\scriptsize(6,\color{cost1}10\color{black},\color{cost2}2\color{black})} (5)
    (4) edge[->,thick] node[sloped] {\scriptsize(8,\color{cost1}1\color{black},\color{cost2}4\color{black})} (5)
    ; 
    
    \node at (-0.8,0) {$10$};
    \node at (5.8,0) {$-10$};
    
    \node at (2.5,-2.5) {$(u_{ij}$,\color{cost1}$c^1_{ij}$\color{black},\color{cost2}$c^2_{ij}$\color{black}$)$};
\end{tikzpicture}
\end{minipage}
\hfill
\begin{minipage}{0.5\textwidth}
	\centering
	\tikzset{every picture/.style={line width=0.75pt}} 
	\tikzstyle{vertex3}=[circle,fill=white,draw=black,minimum size=2.5pt,inner sep=0]
	\tikzstyle{N_point}=[draw, circle, fill=black,minimum size=2pt,inner sep=0]
	\tikzstyle{vertex2}=[diamond,fill=blue,draw=blue,minimum size=4pt,inner sep = 0]
	\tikzstyle{vertex}=[rectangle,fill=orange,draw=orange,minimum size=3pt,inner sep = 0] 
		\newcommand{\Crossa}{$\mathbin{\tikz [x=1ex,y=1ex,line width=.1ex, black] \draw (0,0) -- (1,1) (0,1) -- (1,0);}$}%
	\begin{tikzpicture}[x=3pt,y=3pt,scale=0.7]
		\draw[->] (0,0) -- (60,0) node[right] {$c_1$};
		\draw[->] (0,0) -- (0,60) node[above] {$c_2$};
		
		\foreach \x in {10,20,30,40,50}{
		\pgfmathtruncatemacro\result{90+\x} 
		\draw (\x,0) -- (\x,-2) node[below] {\scriptsize \result};}
		
		\foreach \y in {10,20,30,40,50}
			\pgfmathtruncatemacro\result{90+\y}
		\draw (0,\y) -- (-2,\y) node[left] {\scriptsize \result};

		\draw [draw= white, fill= Azure3!30  ,fill opacity=1 ]   (60,60) -- (6,60) -- (6,54) -- (14,42) -- (35,15) -- (46,9) -- (60,9);

		\draw [fill={rgb, 255:red, 255; green, 255; blue, 255 }  ,fill opacity=0 ]   (6,54) -- (14,42) -- (35,15) -- (46,9);

            	
		\node[vertex3] at ( 13 , 54 ) {};
		\node[vertex3] at ( 10 , 57 ) {};
		\node[vertex3] at ( 17 , 48 ) {};
		\node[vertex3] at ( 24 , 39 ) {};
		\node[vertex3] at ( 14 , 51 ) {};
		\node[vertex3] at ( 21 , 42 ) {};
		\node[vertex3] at ( 28 , 33 ) {};
		\node[vertex3] at ( 35 , 24 ) {};
		\node[vertex3] at ( 18 , 45 ) {};
		\node[vertex3] at ( 25 , 36 ) {};
		\node[vertex3] at ( 32 , 27 ) {};
		\node[vertex3] at ( 39 , 18 ) {};
		\node[vertex3] at ( 17 , 57 ) {};
		\node[vertex3] at ( 24 , 48 ) {};
		\node[vertex3] at ( 21 , 51 ) {};
		\node[vertex3] at ( 28 , 42 ) {};
		\node[vertex3] at ( 35 , 33 ) {};
		\node[vertex3] at ( 18 , 54 ) {};
		\node[vertex3] at ( 25 , 45 ) {};
		\node[vertex3] at ( 32 , 36 ) {};
		\node[vertex3] at ( 39 , 27 ) {};
		\node[vertex3] at ( 46 , 18 ) {};
		\node[vertex3] at ( 22 , 48 ) {};
		\node[vertex3] at ( 29 , 39 ) {};
		\node[vertex3] at ( 36 , 30 ) {};
		\node[vertex3] at ( 43 , 21 ) {};
		\node[vertex3] at ( 50 , 12 ) {};
		\node[vertex3] at ( 24 , 57 ) {};
		\node[vertex3] at ( 28 , 51 ) {};
		\node[vertex3] at ( 35 , 42 ) {};
		\node[vertex3] at ( 25 , 54 ) {};
		\node[vertex3] at ( 32 , 45 ) {};
		\node[vertex3] at ( 39 , 36 ) {};
		\node[vertex3] at ( 46 , 27 ) {};
		\node[vertex3] at ( 22 , 57 ) {};
		\node[vertex3] at ( 29 , 48 ) {};
		\node[vertex3] at ( 36 , 39 ) {};
		\node[vertex3] at ( 43 , 30 ) {};
		\node[vertex3] at ( 50 , 21 ) {};
		\node[vertex3] at ( 26 , 51 ) {};
		\node[vertex3] at ( 33 , 42 ) {};
		\node[vertex3] at ( 40 , 33 ) {};
		\node[vertex3] at ( 47 , 24 ) {};
		\node[vertex3] at ( 54 , 15 ) {};
		\node[vertex3] at ( 35 , 51 ) {};
		\node[vertex3] at ( 32 , 54 ) {};
		\node[vertex3] at ( 39 , 45 ) {};
		\node[vertex3] at ( 46 , 36 ) {};
		\node[vertex3] at ( 29 , 57 ) {};
		\node[vertex3] at ( 36 , 48 ) {};
		\node[vertex3] at ( 43 , 39 ) {};
		\node[vertex3] at ( 50 , 30 ) {};
		\node[vertex3] at ( 33 , 51 ) {};
		\node[vertex3] at ( 40 , 42 ) {};
		\node[vertex3] at ( 47 , 33 ) {};
		\node[vertex3] at ( 54 , 24 ) {};
		\node[vertex3] at ( 30 , 54 ) {};
		\node[vertex3] at ( 37 , 45 ) {};
		\node[vertex3] at ( 44 , 36 ) {};
		\node[vertex3] at ( 51 , 27 ) {};
		\node[vertex3] at ( 58 , 18 ) {};

		\foreach \x/\y in {6/54, 14/42, 35/15,46/9}
			\node[vertex] at (\x,\y) {};
		
		\foreach \x/\y in {10/48, 21/33, 28/24}
		\node[vertex2] at (\x,\y) {};

		\foreach \x/\y in { 13/45, 17/39, 24/30}
		\node[N_point] at (\x,\y) {};
		
		\node at (47,56) {\scriptsize $\conv(\Y)+\R^2_{\geqq}$};
		
		
		
		\node[vertex][anchor= west] at (0,-10) {};
		\draw (2,-10) node [anchor= west][inner sep=1pt]   [align=left] {\scriptsize extreme supported non-dominated};
		
		\node[vertex2][anchor= west] at (0,-14) {};
		\draw (2,-14) node [anchor= west][inner sep=1pt]   [align=left] {\scriptsize supported non-dominated};
		
		\node[N_point][anchor= west] at (0,-18) {};
		\draw (2,-18) node [anchor= west][inner sep=1pt]   [align=left] {\scriptsize unsupported non-dominated};
		
		\node[vertex3] at (1,-22) {};
		\draw (2,-22) node [anchor= west][inner sep=1pt]   [align=left] {\scriptsize dominated};

	\end{tikzpicture}
\end{minipage}
	\caption{Illustration of a BOIMCF problem and its image in the objective space. The example includes ten distinct non-dominated points,  and $83$ non-efficient flows. Note that the origin of the image space is at $(90,90)^{\top}$ and not all dominated vectors are displayed.}\label{fig:biExample}
\end{figure}

\section{Representations}\label{sec:Rep}

Representation algorithms aim to determine a representative subset $R \subseteq \Y$, referred to as representation, that is either optimal with respect to a chosen indicator or guarantees a certain quality level. While approximation algorithms usually determine a subset of feasible solutions, which may include dominated points~\citep{Bazgan2017}, we consider representations consisting of non-dominated points only. For the purpose of evaluating the representations presented in our study, we assume the complete set $\Y_N$ is known \emph{a~priori}.

\subsection{Quality Metrics and Evaluation}\label{sec:Quality}

Depending on the application, different characteristics such as cardinality, diversity, uniformity, or coverage are relevant to accurately represent the non-dominated set. In addition to cardinality, which reflects the manageability of the representative set, we introduce further quality indicators considered in this study. 

\paragraph{Coverage Error}
The coverage error, first introduced in~\citet{Sayin2000}, is defined as the maximum distance from any point $y \in \Y_N$ to its closest representation point $r \in R$. Formally, the coverage error is computed as:

$$
\operatorname{CE}(R)=\max _{y \in \Y_N} \min _{r \in R} d(y, r),
$$

where $d(\cdot, \cdot)$ denotes the distance between the two vectors. Throughout this paper, we use the weighted Tchebycheff distance with weights that are reciprocally proportional to the criteria ranges:
\begin{align*}
    &d\bigl(y^1, y^2\bigr)=\max_{j=1, \ldots, p} \left\{w_j \bigl|y_j^1-y_j^2\bigr|\right\}\\
    &\qquad \text{with} \;\,
    w_j=\frac{1}{\max _{y \in \Y_N} y_j-\min _{y \in \Y_N} y_j}, \qquad j=1, \ldots, p.
\end{align*}

As the coverage error captures worst-case distances, it indicates whether parts of the non-dominated set are missed and thus reflects how well $\Y_N$ is covered by a representative set $R$. A small value is indeed desirable, as it means that for any non-dominated point, there exists a solution in the representation whose image is sufficiently close. Note that the coverage error corresponds to the directed Hausdorff distance from $\Y_N$ to $R$, i.\,e.\ $\operatorname{CE}(R) = d_H(\Y_N,R)$. As we assume $R \subseteq \Y_N$, the directed Hausdorff distance in the opposite direction $d_H(R,\Y_N)$ equals zero.

\Cref{fig:CV-error} illustrates the coverage errors of the BOIMCF problem given in~\Cref{fig:biExample}, depicting the supported or extreme supported non-dominated points as representations.

 \begin{figure}
    \centering
    		\begin{minipage}[]{0.4\textwidth}
			\centering
			\scalebox{0.75}{
				\tikzset{every picture/.style={line width=0.75pt}} 
				\tikzstyle{vertex}=[circle,fill=DodgerBlue4,draw=DodgerBlue4,minimum size=3pt,inner sep=0]
				\tikzstyle{vertex2}=[circle,fill=blue,draw=blue,minimum size=3pt,inner sep = 0]
				\tikzstyle{vertex3}=[circle,fill=black,draw=black,minimum size=3pt,inner sep = 0] 
				\newcommand{\Crossa}{$\mathbin{\tikz [x=1ex,y=1ex,line width=.1ex, black] \draw (0,0) -- (1,1) (0,1) -- (1,0);}$}
				\begin{tikzpicture}[x=3pt,y=3pt,scale=0.9]
					\draw[thick,->] (2,4) -- (60,4) node[right] {$z_1$};
					\draw[thick,->] (4,2) -- (4,60) node[above] {$z_2$};

					\draw[] (14,42) -- (13,42) -- (13,45); 
					\draw[] (14,42) -- (14,39) -- (17,39);
					\draw[] (21,33) -- (21,30) -- (24,30);
					
					\foreach \x/\y in {6/54, 14/42, 35/15,46/9}
					\node[vertex2] at (\x,\y) {};
					
					\foreach \x/\y in {10/48, 21/33, 28/24}
					\node[vertex2] at (\x,\y) {};

					\foreach \x/\y in { 13/45, 17/39, 24/30}
					\node[vertex3] at (\x,\y) {};

			\end{tikzpicture}}
			{\scriptsize
			$ \operatorname{CE}(\mathcal{Y}_{S}) = 0.075 $}
			
		\end{minipage}
		\hspace{1cm}
		\begin{minipage}[]{0.4\textwidth}
			\centering
			\scalebox{0.75}{
				\tikzset{every picture/.style={line width=0.75pt}}  
				\tikzstyle{vertex}=[rectangle,fill=orange,draw=orange,minimum size=3pt,inner sep=0]
				\tikzstyle{vertex2}=[rectangle,fill=orange,draw=orange,minimum size=3pt,inner sep = 0]
				\tikzstyle{vertex3}=[circle,fill=black,draw=black,minimum size=3pt,inner sep = 0] 
				\newcommand{\Crossa}{$\mathbin{\tikz [x=1ex,y=1ex,line width=.1ex, black] \draw (0,0) -- (1,1) (0,1) -- (1,0);}$}   
				\begin{tikzpicture}[x=3pt,y=3pt,scale=0.9]
					\draw[thick,->] (2,4) -- (60,4) node[right] {$z_1$};
					\draw[thick,->] (4,2) -- (4,60) node[above] {$z_2$};

					\draw[] (6,54) -- (6, 48) -- (10, 48);
					\draw[] (14,42) -- (13,42) -- (13,45); 
					\draw[] (14,42) -- (14,39) -- (17,39);
					\draw[] (14,42) -- (14,30) -- (24,30);
					\draw[] (14,42) -- (14,33) -- (21,33);
					\draw[] (28,24) -- (28,15) -- (35,15);
					
					\foreach \x/\y in {6/54, 14/42, 35/15,46/9}
					\node[vertex] at (\x,\y) {};
					
					\foreach \x/\y in {10/48, 21/33, 28/24}
					\node[vertex3] at (\x,\y) {};

					\foreach \x/\y in { 13/45, 17/39, 24/30}
					\node[vertex3] at (\x,\y) {};

			\end{tikzpicture}}
			{\scriptsize
			$ \operatorname{CE}(\mathcal{Y}_{ES}) = 0.267 $}
		\end{minipage}
    \caption{Coverage error for $\Y_N$ in the BOIMCF problem (\Cref{fig:biExample}). Left: Supported non-dominated points (blue) used as the representation of $\Y_N$. Right: Representation by the extreme supported non-dominated points (orange).}
    \label{fig:CV-error}
\end{figure}

\paragraph{Uniformity} Uniformity aims to evaluate the dispersion of a representative subset $R$ by measuring the minimum distance between any pair of points within $R$:
$$
\operatorname{UN}(R) = \min_{r, r' \in R, r \neq r'} d(r, r')
$$

As we usually try to avoid redundancy in the representation, i.\,e., ensuring that no two points in the representation are too close to each other, a high level of uniformity is preferred. However, the level of uniformity decreases with increasing cardinality of the representation $R$. Thus, we employ this measure only in~\Cref{sec:FixedSize}, where we investigate representations of a small fixed number of alternative solutions.

\paragraph{Hypervolume Ratio} Another commonly used measure to assess the quality of a representation is the so-called \emph{hypervolume ratio}. The hypervolume ratio of a representation~$R$ for the non-dominated set \(\Y_N\) is defined as

$$\operatorname{HVR}(R)=\frac{\operatorname{HV}(R)}{\operatorname{HV}\left(\Y_N\right)},$$
The hypervolume indicator, denoted by $\operatorname{HV}(\, \cdot \,)$, is a well-established metric in evolutionary multi-objective optimization~\citep{zitzler2003performance}. It measures the volume of the region dominated by a set of representative points and is bounded by a pre-specified reference point (often the Nadir point). A larger hypervolume indicates a greater dominated region, reflecting a higher-quality approximation of the solution set. In the following, we use the hypervolume ratio to quantify the quality of the representation $R$ relative to the complete set $\Y_N$ that it represents.

Note that the selection of the reference point can significantly influence the calculated hypervolume and the hypervolume ratio. For example, in a bi-objective setting, where the Nadir point is used as the reference point, the two lexicographically optimal solutions will not contribute to the hypervolume. In our experiments, we therefore shift the Nadir point by plus one unit in each component to determine the reference point. However, the influence of the lexicographical minima of the hypervolume remains limited.

\begin{figure}
    \centering
   		\begin{minipage}[]{0.4\textwidth}
			\centering
			\scalebox{0.75}{
				\tikzset{every picture/.style={line width=0.75pt}}  
				\tikzstyle{vertex}=[circle,fill=DodgerBlue4,draw=DodgerBlue4,minimum size=3pt,inner sep=0]
				\tikzstyle{vertex2}=[circle,fill=blue,draw=blue,minimum size=3pt,inner sep = 0]
				\tikzstyle{vertex3}=[circle,fill=black,draw=black,minimum size=3pt,inner sep = 0] 
				\newcommand{\Crossa}{$\mathbin{\tikz [x=1ex,y=1ex,line width=.1ex, black] \draw (0,0) -- (1,1) (0,1) -- (1,0);}$}  
				\begin{tikzpicture}[x=3pt,y=3pt,scale=0.9]
					\draw[thick,->] (2,4) -- (60,4) node[right] {$z_1$};
					\draw[thick,->] (4,2) -- (4,60) node[above] {$z_2$};
					
					\coordinate (ref) at (47, 55);
					
					\foreach \x/\y in {13/45, 17/39, 24/30} {
						\fill[black , opacity=0.2] (\x,\y) -- (\x,55) -- (ref) -- (47,\y) -- cycle;
						\node[vertex3] at (\x,\y) {};
					}
					
					\foreach \x/\y in {6/54, 14/42, 35/15,46/9} {
						\fill[blue, opacity=0.6] (\x,\y) -- (\x,55) -- (ref) -- (47,\y) -- cycle;
						\node[vertex2] at (\x,\y) {};
					}
					
					\foreach \x/\y in {10/48, 21/33, 28/24} {
						\fill[blue, opacity=0.6] (\x,\y) -- (\x,55) -- (ref) -- (47,\y) -- cycle;
						\node[vertex2] at (\x,\y) {};
					}
					
					\node[circle, fill=gray, draw=gray, minimum size=3pt, inner sep=0] at (ref) {};
					
					\node[above right] at (ref) {$r$};
					
					\draw[dashed, gray] (47,4) -- (47,55);
					\draw[dashed, gray] (4,55) -- (47,55);
					
				\end{tikzpicture}
			}
            {\scriptsize
			$ \operatorname{HVR}(\mathcal{Y}_{S}) = 0.97 $}
		\end{minipage}
		\hspace{1cm}
		\begin{minipage}[]{0.4\textwidth}
			\centering
			\scalebox{0.75}{
				\tikzset{every picture/.style={line width=0.75pt}} 
				\tikzstyle{vertex}=[rectangle,fill=orange,draw=orange,minimum size=3pt,inner sep=0]
				\tikzstyle{vertex2}=[circle,fill=blue,draw=blue,minimum size=3pt,inner sep = 0]
				\tikzstyle{vertex3}=[circle,fill=black,draw=black,minimum size=3pt,inner sep = 0] 
				\newcommand{\Crossa}{$\mathbin{\tikz [x=1ex,y=1ex,line width=.1ex, black] \draw (0,0) -- (1,1) (0,1) -- (1,0);}$}%
				\begin{tikzpicture}[x=3pt,y=3pt,scale=0.9]
					\draw[thick,->] (2,4) -- (60,4) node[right] {$z_1$};
					\draw[thick,->] (4,2) -- (4,60) node[above] {$z_2$};
					
					\coordinate (ref) at (47, 55);

					\foreach \x/\y in {10/48, 21/33, 28/24} {
						\fill[black, opacity=0.2] (\x,\y) -- (\x,55) -- (ref) -- (47,\y) -- cycle;
						\node[vertex3] at (\x,\y) {};
					}
					
					\foreach \x/\y in {13/45, 17/39, 24/30} {
						\fill[, opacity=0.2] (\x,\y) -- (\x,55) -- (ref) -- (47,\y) -- cycle;
						\node[vertex3] at (\x,\y) {};
					}
					
					\foreach \x/\y in {6/54, 14/42, 35/15,46/9} {
						\fill[orange, opacity=0.6] (\x,\y) -- (\x,55) -- (ref) -- (47,\y) -- cycle;
						\node[vertex] at (\x,\y) {};
					}
					
					\node[circle, fill=gray, draw=gray, minimum size=3pt, inner sep=0] at (ref) {};
					
					\node[above right] at (ref) {$r$};
					
					\draw[dashed, gray] (47,4) -- (47,55);
					\draw[dashed, gray] (4,55) -- (47,55);
					
				\end{tikzpicture}
			}
			{\scriptsize
			$ \operatorname{HVR}(\mathcal{Y}_{ES}) = 0.73 $
            }
		\end{minipage}
    \caption{Hypervolume ratio for the BOIMCF non-dominated set (\Cref{fig:biExample}). Left: Supported non-dominated points (blue) used as the representation of $\Y_N$. Right: Representation by the extreme supported non-dominated points (orange).}
    \label{fig:HV-ratio}
\end{figure}

\paragraph{$\varepsilon$-indicator}
The (multiplicative) $\varepsilon$-indicator is a widely recognized performance measure for approximation algorithms~\citep{Papadimitriou}. This indicator evaluates how well a representation set 
$R$ approximates the set $\Y_N$ by determining the smallest scaling factor $\varepsilon$ required such that every point in $\Y_N$ is weakly dominated by at least one point in $R$. 
Following~\citet{zitzler2003performance} and~\citet{vaz15representation}, it is calculated as:
\begin{align*}
&I_\varepsilon(R)=\max _{y \in \Y_N} \min _{r \in R} \varepsilon(r, y),\\
&\quad\text{where}\quad
\varepsilon(r, y)=\max _{i \in\{1,\ldots,p\}}\frac{r_i}{y_i}
\end{align*}

The  $\varepsilon$-indicator provides an intuitive measure of approximation quality. If $R=\Y_N$, it holds $I_\varepsilon(R)=1$. The excess above 1.0 represents the maximum relative approximation error across all objectives. Indeed, the factor $I_{\varepsilon}(R)$ directly corresponds to the approximation ratio $(1+\varepsilon)$ that is typically used in approximation theory; see, for example,~\citet{Papadimitriou}.  This concept is closely related to the coverage representation problem with a different notion of ``distance''; achieving good coverage often corresponds to obtaining a good $\varepsilon$-indicator, and vice versa, as demonstrated in~\citet{vaz15representation}.

\section{Evaluation of Supported Points as Representation}\label{sec:RepQualSuppNonPoint}

In this section, we evaluate numerically the representation quality of the sets $\Y_S$ and $\Y_{ES}$. It is assessed using several different indicators across a diverse set of test classes and instances. Consistent with the framework established in~\Cref{sec:NetworkProb}, the experiments are organized in two primary categories: binary and low-capacity problems, and~\ref{eq:MOIMCF} problems with higher arc capacities.

The remainder of this section is structured as follows. First, we describe the experimental setup, including the generation process for the different test instances. Next, we present how to extract the supported and extreme supported non-dominated points from the set of non-dominated points, followed by a detailed analysis and discussion of the numerical results.

\subsection{Experimental Setup}\label{sec:4.1}
The general setup of the computational experiments for each test instance is as follows.  
First, the complete non-dominated point set is computed. For~\ref{eq:MOIMCF},~\ref{eq:MOSP}, and~\ref{eq:BCTP}/~\ref{eq:BGCTP}, the open-source implementation of the Defining Point Algorithm by~\citet{Dachert2024} is used. As~\ref{eq:MOMST} integer programming formulations typically require exponentially many subtour elimination constraints, we use the problem-specific algorithm and implementation presented in~\citet{MaristanyAlgMMST2025} and \citet{MaristanyImpleMMST2023}. The subsets of $\Y_S$ and $\Y_{ES}$ are then extracted from $\Y_N$ by solving modified versions of subproblems introduced in~\citet{Serpil2024}, as detailed in~\Cref{sec:indent}. These are solved using C++ code that utilizes the CPLEX Callable Library (version 22.1.1). 

Metric evaluations of coverage and $\varepsilon$-indicator are implemented in C++. For computations regarding the hypervolume, we use the open-source C implementation (Version 1.3) described in~\citet{Fonseca2006}. Recall that we shift the instance-specific Nadir point by one unit in each coordinate to determine the reference point.

All computations are conducted on a computer with an Intel\textregistered\ Core\texttrademark i8-8700U CPU 3.20 GHz processor with 32 GB RAM, using a Linux operating system. 
As this is primarily an empirical study focused on representation quality, we do not report runtime performance. However, we note that computing the full non-dominated set can become computationally intensive, even for moderately sized instances. More precisely, we encountered scalability limitations across all considered problem classes for instances with more than five objectives. Moreover, the computational effort highly depends on the graph size and varies across different problem classes, which is why we consider different graph sizes for all problems. For a more efficient generation of $\Y_N$ in demanding cases, parallel implementations as proposed in~\cite{Prinz2025} might offer a practical solution.

\paragraph{Instance Generation} 
Across all considered problem classes, instance generation follows standard procedures for network optimization problems adapted to multi-objective contexts. Thereby, we utilize the NETGEN network generator~\citep{Klingman1974NETGENAP}, a widely used instance generator for network optimization problems, particularly for minimum cost flow problems. For~\ref{eq:MOMST} and~\ref{eq:BCTP}/~\ref{eq:BGCTP}, graphs are generated by first constructing random spanning trees and then adding a fixed number of arcs, depending on the desired graph size, with arc selection and cost assignment based on the NETGEN random number generator. Unless stated otherwise, arc costs lie in the interval $[1,10]$ for~\ref{eq:MOMST} instances and in $[1,100]$ for~\ref{eq:BCTP}/~\ref{eq:BGCTP} instances. 

While the multi-objective shortest path problem is indeed a special case of the minimum cost flow problem, we do not use NETGEN networks to generate~\ref{eq:MOSP} instances since they yield very few efficient paths, as observed in~\cite{SkriverAndersen2000}. Instead, we construct test instances similar to the NetMaker networks proposed in~\cite{SkriverAndersen2000} and used in, e.\,g.,~\cite{RaithEhrgott2009} and~\cite{DeLasCasas2021}. Since the original NetMaker code was not available to us, we independently implemented the generator with slight modifications. An instance is generated in the following manner: First, to ensure strong connectivity, a random Hamiltonian cycle is constructed starting at the source node $1$. The position of the sink node $n$ within this cycle is sampled from a normal distribution with a mean of $2\,n/3$ and a standard deviation of $\max\{2, n/10\}$, ensuring a sufficiently long path distance between the source and the sink. To control the network density, each node is assigned a random number of additional outgoing arcs, bounded by specified minimum and maximum degrees. The target nodes for these supplementary arcs are drawn uniformly at random from a localized neighborhood of candidate nodes restricted to a given \textit{interval} along the Hamiltonian cycle. During this step, the addition of symmetric arcs (2-cycles) is explicitly forbidden. Finally, arc costs for each objective are generated by drawing uniformly and independently from the interval $[1,10]$ unless stated otherwise. Python code of the instance generation can be found in \cite{koenen2026supprepresentations}.

For~\ref{eq:MOIMCF}, we generate instances using NETGEN with adapted parameter settings, i.\,e., the number of nodes and arcs, the distribution of supply and sink nodes, the bounds on arc costs, capacities, and total network supply. Unless stated otherwise, each instance features two supply nodes, two sink nodes, a maximum arc cost of 10 across all objectives, and an initial total supply of 50. We also ensured that no instance was generated with a feasible ideal point. For all test classes considered, defined by varying instance characteristics (e.\,g., size and structure), a set of 15 instances is generated. All test instances are publicly available, see \cite{koenen2026supprepresentations}.

\subsection{Identifying Supported and Extreme Supported Points}\label{sec:indent} 
Given the complete list of non-dominated points, modified versions of the two linear programs introduced in~\citet{Serpil2024} are used to identify the sets of all supported and extreme supported non-dominated points. 

\paragraph{Identifying Supported non-dominated Points}
To identify whether a non-dominated point $y^k\in \Y_{N}$ is supported, i.\,e., $y^k\in \Y_S$, the linear program, denoted by $S_{y^k}$, is solved. Note that this check requires the complete non-dominated set $\Y_N$ as input.

\begin{equation}\label{eq:S_yk}\tag{$S_{y^k}$}
	\def\arraystretch{1.4}
	\begin{array}{rr@{\extracolsep{0.7ex}}l@{\extracolsep{0.7ex}}lll}
	    \max && \multicolumn{1}{l}{\displaystyle z^k }\\
		\text{s.t.}  &&\displaystyle \sum_{i=1}^p \lambda_i \,y_i^k \leq \sum_{i=1}^p  \lambda_i\, y_i^j  & \quad j=1, \ldots, |\Y_N|, j\neq k\\[2.6ex]
		&& \displaystyle  \sum_{i=1}^p \lambda_i=1 &\\
        &&\displaystyle z^k \leq \lambda_i & \quad i=1,\ldots,p  \\ 
		&&\lambda_i \geq 0  & \quad i=1, \ldots, p
	\end{array}
\end{equation}

This linear program contains \( p \) non-negative variables, representing the weights in a weighted sum subproblem~\eqref{eq:wsm}. The first set of constraints ensures that there is no other non-dominated point that is strictly better in the weighted sum with the weighting vector $\lambda$. The constraint $\sum_{i=1}^p \lambda_i = 1$ normalizes the weight vector. The third set of constraints bounds the auxiliary variable $z^k$ from above by each weight $\lambda_i$, meaning the objective effectively maximizes the minimum weight of the vector $\lambda$. Consequently, this formulation checks whether a weight vector exists for which \( y^k \) corresponds to the image of an optimal solution to~\eqref{eq:wsm}. In this case, $y^k$ is a supported non-dominated point if and only if $z^k > 0$. Consequently, we can easily determine if $y^k$ is weakly supported non-dominated, $z^k=0$ and $\lambda \in \Lambda^p_0$ or supported non-dominated, $z^k>0$ and $\lambda \in \Lambda^p$~\citep{koenen2025}.

\begin{theorem}
	A non-dominated point $y^k$ is supported, i.\,e., $y^k\in \Y_{S}$ if and only if $z^k>0$ in~\ref{eq:S_yk}.
\end{theorem}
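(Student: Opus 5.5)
The plan is to prove the two directions separately. Throughout I use the fact that~\ref{eq:S_yk} is always feasible and bounded. It is bounded because $z^k \leq \lambda_i \leq 1$. It is feasible because $y^k \in \Y_N$, and I will exhibit a feasible point, or else argue via the optimal-value characterization below. The central observation is that the constraints of~\ref{eq:S_yk} only compare $y^k$ with the finitely many other points of $\Y_N$ and not with all of $\Y$. So the first step is a reduction: for $\lambda \in \Lambda_p$, $y^k$ minimizes $\lambda^\top y$ over $\Y$ if and only if it minimizes $\lambda^\top y$ over $\Y_N$. The nontrivial direction goes as follows. Take any $y \in \Y$. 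If $y$ is dominated, then, assuming $\Y_N$ is externally stable (every dominated point is dominated by some non-dominated point, which holds for the bounded integer outcome sets considered here), there is $y' \in \Y_N$ with $y' \leqslant y$. Because $\lambda > 0$, this gives $\lambda^\top y' \leq \lambda^\top y$, and hence $\lambda^\top y^k \leq \lambda^\top y' \leq \lambda^\top y$.

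For the direction ``$\Rightarrow$'', suppose $y^k \in \Y_S$. Then some $\lambda \in \Lambda_p$ makes a preimage of $y^k$ optimal for~\eqref{eq:wsm}. In particular, $\lambda^\top y^k \leq \lambda^\top y^j$ for all $y^j \in \Y_N$. Setting $z^k = \min_i \lambda_i > 0$ then gives a feasible point with positive objective value, so the optimal value satisfies $z^k \geq \min_i \lambda_i > 0$.

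For the direction ``$\Leftarrow$'', let $(\lambda^*, z^*)$ be an optimal solution with $z^* > 0$. The constraints $z^* \leq \lambda_i^*$ force $\lambda^* > 0$, and together with $\sum_i \lambda_i^* = 1$ this gives $\lambda^* \in \Lambda_p$. The first constraint family says that $y^k$ minimizes $\lambda^{*\top} y$ over $\Y_N$. By the reduction step, $y^k$ then minimizes over all of $\Y$, so any preimage $x^k$ of $y^k$ is optimal for~\eqref{eq:wsm} with a strictly positive weight. Since $y^k \in \Y_N$, $x^k$ is efficient, and therefore $y^k \in \Y_S$ by definition.

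The main obstacle is the reduction step, namely justifying that comparing only against $\Y_N$ suffices. It relies on external stability (domination property) of $\Y_N$. I would establish this from the finiteness or boundedness of $\Y_N$ and the integrality of $\Y$: for fixed $\lambda > 0$, any dominated $y$ lies in a chain of strictly decreasing $\lambda$-values inside the bounded region $\{y' \in \Y : y' \leqq y\}$. That region is finite because $\Y$ is integral and bounded below by $y^I$, so the chain terminates at a non-dominated point.
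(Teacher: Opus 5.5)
Your proof is correct. It has the same two-direction skeleton as the paper's, and your ``$\Rightarrow$'' direction is identical to the paper's.

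The difference lies in the reduction from $\Y_N$ to $\Y$ in ``$\Leftarrow$''. The paper argues by contradiction. If some $\hat y\in\Y$ had $\lambda^\top\hat y<\lambda^\top y^k$, it would beat every point of $\Y_N$, so no non-dominated point could be the image of an optimal solution of $P_\lambda$. This tacitly uses two facts: $P_\lambda$ attains its minimum, and for $\lambda>0$ its minimizers are efficient. You instead invoke external stability of $\Y_N$ and prove it from finiteness of the lower section $\{y'\in\Y : y'\leqq y\}$.

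The paper's route is shorter and needs attainment only for the single weight at hand. Yours makes the underlying finiteness assumption explicit. Moreover, $y'\leqslant y$ implies $\lambda^\top y'\leq\lambda^\top y$ for every $\lambda\geqq 0$, so your reduction carries over verbatim to $\lambda\in\Lambda_p^0$. It therefore also justifies the paper's side remark that a feasible LP with optimal value $z^k=0$ corresponds to a weakly supported, but not supported, point. For all problems in the paper $\Y$ is finite, so you do not even need integrality. That matters for the cable-trench variants, whose costs are real-valued.

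One correction: your opening assertion that $(S_{y^k})$ is always feasible is false. By your own reduction, any feasible $\lambda\in\Lambda_p^0$ would make $y^k$ optimal for $P_\lambda$ over all of $\Y$, i.e., weakly supported. Hence for an unsupported point, such as the black points in Figure~\ref{fig:biExample}, the LP is infeasible. Neither direction actually uses this claim: in ``$\Rightarrow$'' you exhibit a feasible point, and in ``$\Leftarrow$'' feasibility is part of the hypothesis. So just delete it and read ``$z^k>0$'' as ``the LP is feasible with positive optimal value''.
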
 

\begin{proof}\mbox{}
    \begin{itemize}
     \item[(i)] Suppose that $y^k \in \Y_S$. Then $y^k$ is the image of an efficient solution that solves $P_{\hat{\lambda}}$ for some $\hat{\lambda} \in \Lambda^p$; that is, $\hat{\lambda}^\top y^k \leq \hat{\lambda}^\top y$ for all $y \in \Y$. Hence, $\hat{\lambda}$ is a feasible solution to~\ref{eq:S_yk}. Since $\hat{\lambda}>0$, the condition $z^k >0$ must hold.
     
     \item[(ii)] Suppose that $z^k > 0$. Then there exists $\lambda \in \Lambda^p$ such that $\lambda^\top y^k \leq \lambda^\top y^j$ for $y^j \in \Y_N$. This implies that $\lambda^\top y^k \leq \lambda^\top y$ for any $y \in \Y$. In order to see this, note that if there exists $\hat{y} \in \Y$ such that $\lambda^\top \hat{y} < \lambda^\top y^k$, then $\lambda^\top y < \lambda^\top$ for any $y^j \in \Y_N$. This in turn implies that no element of $\Y_N$ can be the image of an optimal solution to problem $P_\lambda$, which is a contradiction. Therefore, we conclude that $\lambda^\top y^k \leq \lambda^\top y$ for all $y \in \Y$, meaning $y^k$ is the image of an optimal solution to problem $P_\lambda$ with $\lambda\in \Lambda^p$, and hence $y^k \in \Y_S$ follows.
     \end{itemize}
\end{proof}

\paragraph{Identifying Extreme Supported non-dominated Points}
To determine whether a supported non-dominated point $y^k\in \Y_{ES}$, the following linear program, denoted by $E_{y^k}$, is solved. This check requires the complete set of supported non-dominated points as input.

\begin{equation}\label{eq:E_yk}\tag{$E_{y^k}$}
	\def\arraystretch{1.4}
	\begin{array}{rr@{\extracolsep{0.7ex}}l@{\extracolsep{0.7ex}}lll}
		 \min && \multicolumn{1}{l}{ \displaystyle \alpha_k }\\[0.6ex]
		\text{s.t.}  &&\displaystyle \sum_{j=1}^{|\Y_S|} \alpha_j y_i^j = y_i^k   & \quad i=1, \ldots, p \\[3ex]
		&&\displaystyle  \sum_{j=1}^{|\Y_S|} \alpha_j=1 &\\
		&&\alpha_j \geq 0  & \quad j=1, \ldots, |\Y_S|
	\end{array}
\end{equation}
This formulation is a linear program with $|\Y_S|$ non-negative variables. The set of constraints ensure that the non-dominated point $y^k$ is a convex combination of all supported non-dominated points. If $y^k$ can be constructed as a proper convex combination, i.\,e.\ $y^k=\sum_{j=1}^{|\Y_S|} \alpha_j\, y^k$ with $\alpha_j < 1$ for all $j$, then it is not an extreme point.

\begin{theorem}[\citealt{Serpil2024}]
	A non-dominated point $y^k \in \Y_{N}$ is extreme supported, i.\,e., $y^k\in \Y_{ES}$ if and only if $\alpha_k=1$ in~\ref{eq:E_yk}.
\end{theorem}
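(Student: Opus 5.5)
Throughout, $y^k$ is assumed to lie in $\Y_S$, since \ref{eq:E_yk} takes the supported points as input and $y^k$ must appear among them for $\alpha_k$ to be a variable. The plan rests on one fact: $\Y_{ES}$ is the vertex set of $\mathcal{P}=\conv(\Y)+\R^p_{\geqq}$, and every vertex of $\mathcal{P}$ is the unique optimum of some weighted sum $P_\lambda$ with $\lambda\in\Lambda_p$. This gives $\Y_{ES}\subseteq\Y_S$. We also use that $\conv(\Y_S)$ is contained in the non-dominated frontier of $\mathcal{P}$, and that every point of $\Y_S$ lies on a bounded face of $\mathcal{P}$. Hence every $y\in\Y_S$ is a convex combination of points of $\Y_{ES}$. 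Once this is in place, both directions reduce to comparing $\conv(\Y_S)$ with its vertex set.

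\emph{($\Rightarrow$)} Let $y^k\in\Y_{ES}$ and let $\alpha$ be any feasible point of \ref{eq:E_yk}. Then $y^k=\sum_j\alpha_j y^j$ with every $y^j\in\Y_S\subseteq\mathcal{P}$. Because $y^k$ is an extreme point of $\mathcal{P}$, it cannot be written as a convex combination of other points of $\mathcal{P}$ with positive weight on them. So every $j$ with $\alpha_j>0$ satisfies $y^j=y^k$.

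Here we need the points in the list to be distinct, which holds because $\Y_S$ is a set. Then $\alpha_k=1$ for every feasible $\alpha$, and in particular the minimum is $1$. To make this step rigorous, I would argue directly with a strictly supporting $\lambda$. Taking $\lambda$ with $\lambda^\top y^k<\lambda^\top y$ for all $y\in\mathcal{P}\setminus\{y^k\}$ gives
\[
\lambda^\top y^k=\sum_j\alpha_j\lambda^\top y^j,
\]
and this equality forces $\alpha_j=0$ whenever $y^j\neq y^k$.

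\emph{($\Leftarrow$)} I argue by contraposition. Suppose $y^k\in\Y_S\setminus\Y_{ES}$. Since $y^k$ is a supported point that is not a vertex, the fact above writes it as a convex combination of vertices $y^{j}\in\Y_{ES}\subseteq\Y_S$, all different from $y^k$. Setting $\alpha_k=0$ gives a feasible solution of \ref{eq:E_yk} with objective value $0<1$, so the optimum satisfies $\alpha_k<1$.

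The main obstacle is justifying this fact in the generality of \eqref{eq:MOCO}, namely that each supported point is a convex combination of extreme supported points. The route I would take is the following. Let $F=\{y\in\mathcal{P}:\lambda^\top y=\min_{\mathcal{P}}\lambda^\top y\}$ for the strictly positive $\lambda$ certifying $y^k\in\Y_S$.

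\begin{itemize}
\item Since $\lambda>0$, the face $F$ contains no recession direction of $\R^p_{\geqq}$, so it is a bounded face, i.e. a polytope.
\item As a polytope, $F$ is the convex hull of its vertices.
\item The vertices of $F$ are vertices of $\mathcal{P}$, because $F$ is a face of $\mathcal{P}$. Vertices of $\mathcal{P}$ lie in $\Y$, and they are non-dominated, so they belong to $\Y_{ES}$.
\end{itemize}

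Finiteness of the relevant part of $\Y$, and hence polyhedrality of $\mathcal{P}$, is inherited from the boundedness of $\Y_N$ assumed in the paper.
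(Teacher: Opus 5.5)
The paper gives no proof of this statement. It quotes the result from the literature, and its only justification is the informal remark before it: a proper convex combination of supported points rules out extremality. That is the easy direction. Your argument is correct and self-contained, and its substantive part is exactly what that remark leaves out. For $y^k\in\Y_S\setminus\Y_{ES}$, the face $F$ of $\mathcal{P}$ exposed by a strictly positive certifying $\lambda$ is a polytope whose vertices lie in $\Y_{ES}\subseteq\Y_S$, so $\alpha_k=0$ is feasible in $E_{y^k}$. Together with your strict-separation argument for ($\Rightarrow$), this also shows that for $y^k\in\Y_S$ the optimal value of $E_{y^k}$ is always either $0$ or $1$.

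Two points need repair.

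First, the claim that $\conv(\Y_S)$ lies in the non-dominated frontier of $\mathcal{P}$ is false. Take $\Y=\{(0,3),(1,1),(3,0)\}$: all three points are extreme supported. Yet the midpoint $(1.5,1.5)$ of $(0,3)$ and $(3,0)$ lies in the interior of $\mathcal{P}$ and is dominated by $(1,1)$. You never actually use this claim, so delete it; the face argument alone gives the decomposition into extreme supported points.

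Second, the step ``$\lambda>0$, hence $F$ is bounded'' only rules out recession directions coming from $\R^p_{\geqq}$. It tacitly assumes $\conv(\Y)$ contributes none, i.e.\ that $\operatorname{rec}(\mathcal{P})=\R^p_{\geqq}$. Your closing sentence does not deliver this:
\begin{itemize}
\item boundedness of $\Y_N$ does not make $\Y$ finite;
\item for infinite $\Y$, polyhedrality of $\mathcal{P}$ comes from rationality of the data (Meyer's theorem), not from boundedness of $\Y_N$.
\end{itemize}
For every problem studied in the paper, $\Y$ is finite (binary variables or finite capacities). Then $\mathcal{P}$ is a polyhedron with recession cone $\R^p_{\geqq}$ and vertices in $\Y$, and your proof goes through as written. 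In a general integer setting you would need an extra step. For example, a nonzero recession direction $d$ of $\conv(\Y)$ with $\lambda^\top d=0$ can be realized as $d=Cr$ for an integral recession direction $r$ of the feasible region. It would then generate infinitely many optimal points of $P_\lambda$, all lying in $\Y_S$, contradicting boundedness of $\Y_N$.
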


\subsection{Binary and Low-Capacity Problems}

 The problems analyzed in this section are characterized by binary decision variables or low arc capacities, as introduced in~\Cref{subsec:BinProb}, including~\ref{eq:MOMST},~\ref{eq:MOSP},~ \ref{eq:BCTP}/~\ref{eq:BGCTP}, and~\ref{eq:MOIMCF} with unit or highly restricted arc capacities.

To evaluate representations across various binary and low-capacity structures, we consider the following three primary directions: network size and density, dimensionality of the objective space, and cost correlation. To ensure clarity and conciseness, the following tables report the mean values across the 15 randomly generated instances for selected representative configurations of each problem type.

\paragraph{Impact of Network Size and Density}
 In the first test class the impact of scaling the network topology is investigated. We examine the effects of varying the number of nodes ($n$) and arcs ($m$) simultaneously, as well as increasing network density by adding edges to a network, while keeping the number of nodes fixed. For~\ref{eq:MOMST} and~\ref{eq:MOSP}, the number of objectives is fixed at $p=3$.

\begin{table}[h!]\tiny
    \centering
    \caption{Mean numerical results for selected configurations across different problem classes. Each configuration averages 15 instances.}
    \vspace{0.4em}
    \label{tab:combined_mean_results}
        \begin{tabular}{ll r c c c c c c c c}
            \toprule
            & & & \multicolumn{2}{c}{Size {(Ratio)}} & \multicolumn{2}{c}{HVR $\uparrow$}  & \multicolumn{2}{c}{CE $\downarrow$} & \multicolumn{2}{c}{$I_\varepsilon$ $\downarrow$} \\
            \cmidrule(lr){4-5} \cmidrule(lr){6-7} \cmidrule(lr){8-9} \cmidrule(lr){10-11}
            Problem & & \multicolumn{1}{c}{$|\Y_N|$} & $\Y_S$ & $\Y_{ES}$ & $\Y_S$ & $\Y_{ES}$ & $\Y_S$ & $\Y_{ES}$ & $\Y_S$ & $\Y_{ES}$ \\
            \midrule
            \multicolumn{11}{l}{{MOMST}} \\
            $n=30$ & $m=40$  & 166.800 & 0.359 & 0.342 & 0.968 & 0.964 & 0.138 & 0.140 & 1.0207 & 1.0215 \\
                   & $m=70$  & 1147.133 & 0.226 & 0.194 & 0.978 & 0.975 & 0.083 & 0.085 & 1.0307 & 1.0342 \\
                   & $m=100$ & 2122.933 & 0.208 & 0.166 & 0.983 & 0.979 & 0.069 & 0.071 & 1.0375 & 1.0449 \\
            $n=50$ & $m=60$  & 184.200 & 0.369 & 0.337 & 0.972 & 0.968 & 0.133 & 0.135 & 1.0123 & 1.0125 \\
                   & $m=70$  & 736.933 & 0.246 & 0.212 & 0.977 & 0.973 & 0.105 & 0.107 & 1.0145 & 1.0154 \\
                   & $m=80$  & 1093.091 & 0.222 & 0.188 & 0.978 & 0.975 & 0.087 & 0.088 & 1.0172 & 1.0189 \\
            \midrule
            \multicolumn{11}{l}{{MOSP}} \\
            $n=250$ & $m=500$  & 509.067  & 0.293 & 0.286 & 0.985 & 0.985 & 0.116 & 0.116 & 1.008 & 1.008 \\
                    & $m=1000$ & 129.067  & 0.399 & 0.395 & 0.976 & 0.976 & 0.232 & 0.232 & 1.038 & 1.038 \\
            $n=1000$& $m=3000$ & 4995.067 & 0.176 & 0.170 & 0.985 & 0.984 & 0.098 & 0.098 & 1.014 & 1.014 \\
                    & $m=4000$ & 8601.800 & 0.162 & 0.152 & 0.988 & 0.987 & 0.113 & 0.113 & 1.019 & 1.019 \\
                    & $m=5000$ & 497.600  & 0.236 & 0.234 & 0.977 & 0.977 & 0.195 & 0.195 & 1.061 & 1.061 \\
                    & $m=10000$& 127.200  & 0.286 & 0.283 & 0.968 & 0.967 & 0.271 & 0.271 & 1.154 & 1.154 \\
            \midrule
            \multicolumn{11}{l}{{BCTP}} \\
            $n=100$ & $m=200$ & 180.467 & 0.116 & 0.115 & 0.988 & 0.988 & 0.155 & 0.155 & 1.0138 & 1.0138 \\
                    & $m=400$ & 167.467 & 0.129 & 0.119 & 0.990 & 0.989 & 0.206 & 0.206 & 1.0127 & 1.0130 \\
            \midrule
            \multicolumn{11}{l}{{BGCTP}} \\
            $n=50$  & $m=100$ & 183.000 & 0.133 & 0.132 & 0.977 & 0.977 & 0.147 & 0.147 & 1.0414 & 1.0414 \\
                    & $m=200$ & 138.933 & 0.127 & 0.124 & 0.937 & 0.936 & 0.141 & 0.141 & 1.0264 & 1.0264 \\
            \bottomrule
        \end{tabular}
\end{table}

Across all problem classes, when comparing graphs with the same density, increasing the number of nodes generally leads to larger non-dominated sets, while the relative proportions of supported and extreme supported points decrease. Notably, for~\ref{eq:MOSP} and~\ref{eq:BCTP}, varying the network density leads to a different effect. For a fixed number of nodes, the total number of non-dominated points decreases as the density increases. This behavior can be explained by a problem-specific characteristic. In sparse graphs, paths between source and target nodes often contain many edges, causing costs to accumulate in shortest path objectives and resulting in stronger conflicts between objectives. As the density increases and additional arcs are added, more direct paths become available. These paths effectively act as shortcuts, reducing the accumulated costs and thereby the conflicts between the objectives, leading to smaller non-dominated sets, as discussed in more detail in~\cite{Loehken2025}.

Despite this scaling behavior for~\ref{eq:MOSP} and~\ref{eq:BCTP}, all problem classes share a common characteristic: the cardinalities of $\Y_S$ and $\Y_{ES}$ remain nearly identical. Because non-extreme supported points are almost non-existent in these structures, the representation quality metrics for both sets are indistinguishable. Both sets maintain excellent representation quality across all configurations, consistently achieving a hypervolume ratio (HVR) above 0.96, low coverage errors (CE), and near-optimal $I_\epsilon$ indicators. Ultimately, the data confirm that computing only the extreme supported points is sufficient to guarantee a high-quality representation.

\paragraph{Scaling the Number of Objectives}
In the second test class, the network topology is held constant while the dimensionality of the objective space varies from $p=3$ to $p=5$. 

For~\ref{eq:MOSP} instances we chose a maximum degree of $20$, and an interval size defined as four times the maximum degree.

\begin{table}[h!]\tiny
    \centering
    \caption{Mean numerical results for increasing the number of objectives $p$. Each configuration averages 15 instances.}
    \label{tab:binary_objectives}
    \vspace{0.5em}
    \begin{tabular}{ll r c c c c c c c c}
        \toprule
        & & & \multicolumn{2}{c}{Size {(Ratio)}} & \multicolumn{2}{c}{HVR $\uparrow$}  & \multicolumn{2}{c}{CE $\downarrow$} & \multicolumn{2}{c}{$I_\varepsilon$ $\downarrow$} \\
        \cmidrule(lr){4-5} \cmidrule(lr){6-7} \cmidrule(lr){8-9} \cmidrule(lr){10-11}
        Problem & & \multicolumn{1}{c}{$|\Y_N|$} & $\Y_S$ & $\Y_{ES}$ & $\Y_S$ & $\Y_{ES}$ & $\Y_S$ & $\Y_{ES}$ & $\Y_S$ & $\Y_{ES}$ \\
        \midrule
      
      MOMST & $p=3$ &   323.5 & 0.293 & 0.271 & 0.971 & 0.968 & 0.141 & 0.141 & 1.0386 & 1.0407 \\
        $n=20, m=40$    & $p=4$ &  2883.5 & 0.171 & 0.169 & 0.973 & 0.973 & 0.155 & 0.155 & 1.0416 & 1.0416 \\
        & $p=5$ & 19528.0 & 0.108 & 0.108 & 0.976 & 0.976 & 0.170 & 0.170 & 1.0429 & 1.0429 \\
        \midrule
        MOSP & $p=3$ &  40.1 & 0.474 & 0.472 & 0.969 & 0.969 & 0.299 & 0.301 & 1.147 & 1.147 \\
               $n=20, m=40$          & $p=4$ & 106.1 & 0.379 & 0.379 & 0.980 & 0.980 & 0.331 & 0.331 & 1.163 & 1.163 \\
                        & $p=5$ & 220.9 & 0.314 & 0.314 & 0.984 & 0.984 & 0.371 & 0.371 & 1.171 & 1.171 \\
        \bottomrule
    \end{tabular}
\end{table}

As shown in~\Cref{tab:binary_objectives}, increasing the number of objectives $p$ causes the total number of non-dominated points to grow exponentially. In~\ref{eq:MOMST} instances, for example, $|\mathcal{Y}_{N}|$ scales from a mean of $323.5$ at $p=3$ to nearly $20,000$ at $p=5$, leading to a steady decrease in the relative proportions of supported and extreme supported points. 

The data confirm our hypothesis for binary and low-capacity structures: the cardinality of $\mathcal{Y}_{S}$ and $\mathcal{Y}_{ES}$ remain nearly identical across all tested dimensions, achieving similarly high values across all considered quality indicators. Consequently, even in higher dimensions, the set of extreme supported non-dominated points $\mathcal{Y}_{ES}$ is already a high-quality representation.

\paragraph{Impact of Cost Correlation}
To investigate the impact of correlations among objective function coefficients, we conducted analogous experiments using test instances similar to those introduced by~\citet{Serpil2024}. The results show that an increasing cost correlation leads to a substantial reduction of the total number of non-dominated points, while the relative proportion of (extreme) supported points increases, thereby maintaining consistently high representation quality. Detailed descriptions of these experiments, along with the complete numerical results, are available in the associated repository~\citep{koenen2026supprepresentations}.

\paragraph{Minimum Cost Flow Problems with Unit or Low Arc Capacities}\label{para:Min_Cost_Flow}

The experiments for networks with unit or highly restricted arc capacities closely mirror the behavior observed in the binary capacity problems mentioned above. To avoid redundancy and maintain clarity, we omit the results of these tests here. Readers interested in small-capacity networks are referred to~\cite{koenenphd} for comprehensive results or can find additional instances with results in the associated repository~\citep{koenen2026supprepresentations}.

\subsection{Capacitated Flow Structures}

This section addresses~\ref{eq:MOIMCF} problems where arc capacities $u(a) > 1$ allow successive flow increments along cycles, which can generate numerous integer flows between adjacent basic feasible flows and thus \emph{intermediate} supported non-dominated points on the faces of the convex hull between consecutive extreme non-dominated points. The presence of these intermediate supported non-dominated points significantly alters the structure of ``good'' representations.

In addition to the tests in the previous section, we examine two further directions: scaling supplies and capacities and benchmark instances with a high number of supply and demand nodes.

\paragraph{Scaling Supply and Capacity}

To analyze how scaling node supplies and arc capacities affects the representation quality of $\Y_S$ and $\Y_{ES}$, we conduct a scaling test based on a parameter $\mu$ as follows. Starting with a baseline instance class ($n=50$, $m=100$) where $\mu=1$, we systematically scale both the node flow balances and the upper arc capacities by factors of $\mu \in \{1,2,3,4\}$. This approach increases the overall feasible flow volume without altering the underlying graph topology, leading to the following parameterized problem formulation: \begin{equation}\label{eq:mu-MOIMCF}\tag{$\mu$-MOIMCF}
    \begin{aligned}
        \min \quad & c(f) = \left( c_1(f), \dots, c_p(f) \right)^{\top} \\
        \text{s.t.} \quad & f^{\text{out}}(v) - f^{\text{in}}(v) = \mu \cdot b(v) && \forall v \in V \\
        & 0 \leq f(a) \leq \mu \cdot u(a) && \forall a \in A \\
       & f(a) \in \Z_{\geqq} && \forall a \in A.
    \end{aligned}
    \end{equation}

\begin{table}[h!]\tiny
    \centering
    \caption{Scaling Tests for $p=3$.}\label{tab_factor}
    \vspace{0.4em}
    \begin{tabular}{ll r c c c c c c c c}
        \toprule
        & & & \multicolumn{2}{c}{Size {(Ratio)}} & \multicolumn{2}{c}{HVR $\uparrow$}  & \multicolumn{2}{c}{CE $\downarrow$} & \multicolumn{2}{c}{$I_\varepsilon$ $\downarrow$} \\
        \cmidrule(lr){4-5} \cmidrule(lr){6-7} \cmidrule(lr){8-9} \cmidrule(lr){10-11}
        Class & & \multicolumn{1}{c}{$|\Y_N|$} & $\Y_S$ & $\Y_{ES}$ & $\Y_S$ & $\Y_{ES}$ & $\Y_S$ & $\Y_{ES}$ & $\Y_S$ & $\Y_{ES}$ \\
        \midrule 
        $\mu= 1$ & min  &    2.0 & 0.242 & 0.018 & 0.918 & 0.065 & 0.000 & 0.000 & 1.000 & 1.000 \\
        & max  & 2037.0 & 1.000 & 1.000 & 1.000 & 1.000 & 0.400 & 0.800 & 1.006 & 1.040 \\
        & mean &  489.4 & 0.612 & 0.138 & 0.990 & 0.701 & 0.094 & 0.323 & 1.003 & 1.026 \\
        \midrule
        $\mu = 2$ & min  &    3.0 & 0.200 & 0.004 & 0.948 & 0.028 & 0.000 & 0.188 & 1.000 & 1.002 \\
        & max  & 8334.0 & 1.000 & 0.667 & 1.000 & 0.895 & 0.391 & 0.778 & 1.003 & 1.041 \\
        & mean & 2032.2 & 0.590 & 0.072 & 0.994 & 0.646 & 0.066 & 0.360 & 1.001 & 1.027 \\
        \midrule
        $\mu = 3$ & min  &     4.0 & 0.169 & 0.002 & 0.961 & 0.017 & 0.000 & 0.188 & 1.000 & 1.002 \\
        & max  & 18885.0 & 1.000 & 0.500 & 1.000 & 0.893 & 0.260 & 0.692 & 1.002 & 1.041 \\
        & mean &  4810.0 & 0.584 & 0.049 & 0.996 & 0.628 & 0.046 & 0.343 & 1.001 & 1.027 \\
        \midrule
        $\mu = 4$ & min  &     5.0 & 0.156 & 0.001 & 0.969 & 0.013 & 0.000 & 0.188 & 1.000 & 1.002 \\
        & max  & 33689.0 & 1.000 & 0.400 & 1.000 & 0.892 & 0.195 & 0.647 & 1.001 & 1.041 \\
        & mean &  8733.8 & 0.585 & 0.038 & 0.997 & 0.620 & 0.034 & 0.354 & 1.0010 & 1.0270 \\
        \midrule
        $\boldsymbol{\varnothing}$ & \textbf{mean} & 4016.350 & 0.593 & 0.074 & 0.994 & 0.649 & 0.060 & 0.345 & 1.002 & 1.027 \\
        \bottomrule
    \end{tabular}
\end{table}

\begin{figure}[ht]
    \centering
    \begin{minipage}[t]{0.48\textwidth}
        \centering
        \begin{tikzpicture}
            \begin{axis}[
                width=\linewidth,
                height=6cm,
                ymin=0.0, ymax=1.1, 
                xtick={1, 2, 3, 4},
                xticklabels={$\mu=1$, $\mu=2$, $\mu=3$, $\mu=4$},
                grid=major,
                legend style={
                    at={(0.5, -0.2)},
                    anchor=north,
                    legend columns=2,
                    font=\tiny,
                    draw=none,
                    /tikz/every even column/.append style={column sep=0.3cm}
                }
            ]

            \addplot[color=blue, thick, dashed, mark=*] coordinates {
                (1, 0.612) (2, 0.590) (3, 0.584) (4, 0.585)
            };
            \addlegendentry{Ratio $\frac{|\Y_S|}{|\Y_N|}$}

            \addplot[color=blue, thick, mark=*] coordinates {
                (1, 0.990) (2, 0.994) (3, 0.996) (4, 0.997)
            };
            \addlegendentry{HVR($\Y_S$)}

            \addplot[color=orange, thick, dashed, mark=square*] coordinates {
                (1, 0.138) (2, 0.072) (3, 0.049) (4, 0.038)
            };
            \addlegendentry{Ratio $\frac{|\Y_{ES}|}{|\Y_N|}$}

            \addplot[color=orange, thick, mark=square*] coordinates {
                (1, 0.701) (2, 0.646) (3, 0.628) (4, 0.620)
            };
            \addlegendentry{HVR($\Y_{ES}$)}

            \end{axis}
        \end{tikzpicture}
        \caption{Ratios \& HVR means for~\Cref{tab_factor}.}
        \label{fig:ratio_hvr_plot}
    \end{minipage}
    \hfill
    \begin{minipage}[t]{0.48\textwidth}
        \centering
        \begin{tikzpicture}
            \begin{axis}[
                width=\linewidth,
                height=6cm,
                ymin=0.0, ymax=0.9, 
                xtick={1, 2, 3, 4},
                xticklabels={$\mu=1$, $\mu=2$, $\mu=3$, $\mu=4$},
                grid=major,
                legend style={
                    at={(0.5, -0.2)},
                    anchor=north,
                    legend columns=2,
                    font=\tiny,
                    draw=none,
                    /tikz/every even column/.append style={column sep=0.3cm}
                }
            ]

            \addplot[name path=CE_Smin, draw=none, forget plot] coordinates {
                (1, 0.0) (2, 0.0) (3, 0.0) (4, 0.0)
            };
            \addplot[name path=CE_Smax, draw=none, forget plot] coordinates {
                (1, 0.4) (2, 0.391) (3, 0.26) (4, 0.195)
            };
            \addplot[blue!20, opacity=0.5, area legend] fill between[of=CE_Smin and CE_Smax];
            \addlegendentry{Range $\operatorname{CE}(\Y_S)$}

            \addplot[color=blue, thick, mark=*] coordinates {
                (1, 0.094) (2, 0.066) (3, 0.046) (4, 0.034)
            };
            \addlegendentry{Mean $\operatorname{CE}(\Y_S)$}

            \addplot[name path=CE_ESmin, draw=none, forget plot] coordinates {
                (1, 0.0) (2, 0.188) (3, 0.188) (4, 0.188)
            };
            \addplot[name path=CE_ESmax, draw=none, forget plot] coordinates {
                (1, 0.8) (2, 0.778) (3, 0.692) (4, 0.647)
            };
            \addplot[orange!30, opacity=0.5, area legend] fill between[of=CE_ESmin and CE_ESmax];
            \addlegendentry{Range $\operatorname{CE}(\Y_{ES})$}

            \addplot[color=orange, thick, mark=square*] coordinates {
                (1, 0.323) (2, 0.360) (3, 0.343) (4, 0.354)
            };
            \addlegendentry{Mean $\operatorname{CE}(\Y_{ES})$}

            \end{axis}
        \end{tikzpicture}
        \caption{Coverage Error (CE) ranges and means for~\Cref{tab_factor}.}
        \label{fig:ce_plot}
    \end{minipage}
\end{figure}

As expected, the number of non-dominated points grows significantly with the increasing scaling factor $\mu$. However, the relative behavior differs between the sets $\Y_N$, $\Y_S$, and $\Y_{ES}$. As the mean cardinality of $\Y_N$ increases from $489.4$ (base case) to $8733.8$ (factor~4), the ratio $\lvert \Y_S \rvert / \lvert \Y_N \rvert$ remains remarkably stable at approximately~0.60. This indicates that $|\Y_S|$ scales proportionally with $|\Y_N|$ under increasing supplies and capacities.  In contrast,  the same does not hold for the extreme supported set $\Y_{ES}$. In fact, the absolute number of extreme supported non-dominated points remains constant when $\mu$ is varied. Consequently, as the total number of points grows, the ratio of $ |\Y_{ES} | / | \Y_{N} |$ decreases monotonously from $0.138$ in the base case to $0.038$ for $\mu=4$, indicating that the number of extreme supported points becomes comparatively small as the feasible flow range widens. As discussed in previous sections, this is a direct consequence of the resulting network structure: larger capacities allow for larger incremental flow augmentations along residual cycles, which generate additional intermediate supported points between consecutive extreme supported points.

Investigating the representation quality of $\Y_S$ and $\Y_{ES}$, the hypervolume results (visualized in Figure~\ref{fig:ratio_hvr_plot}) show that $\operatorname{HVR}(\Y_S)$ remains consistently high across all scaling factors, maintaining mean values above 0.99. In contrast, the hypervolume ratio of $\Y_{ES}$ decreases to values between 70\% and 62\%, confirming that the extreme supported solutions alone become insufficient to represent $\Y_N$ with comparable quality. This behavior is also reflected in the coverage error and $\varepsilon$-indicator. As illustrated in Figure~\ref{fig:ce_plot}, the mean coverage error for $\Y_S$ actually improves from 0.094 to 0.034 as $\mu$ increases, with significantly shrinking variance. Conversely, $\Y_{ES}$ requires a high, volatile coverage radius, averaging around 0.35. Consequently, relying solely on extreme supported solutions as representation results in a coverage error nearly six times higher compared to a representation by the entire set of supported non-dominated points. This difference is also visible in the $\varepsilon$-indicator,  where $\Y_S$ averages a near-optimal value of 1.002. When considering only the excess error (above the ideal value of 1.0), the average error of $\Y_S$ (0.002) corresponds to merely 7.4\% of the error produced by $\Y_{ES}$ (0.027), further emphasizing the superior representation quality of the full set of  supported non-dominated points.

Overall, these results highlight that while scaling capacities and supply/demand increases the size of the non-dominated set, the relative structure of $\Y_N$ with respect to $\Y_S$ remains stable across all scaling factors, and the supported set continues to offer a high-quality representation of $\Y_N$. In contrast, as the feasible flow range grows, the representation quality of $\Y_{ES}$ decreases, with $\Y_S$ consistently yielding a higher representation quality.

\paragraph{Impact of Network Size and Density} 
Several key findings regarding variations in network size and density, i.\,e., the number of nodes $n$ and arcs $m$, can be identified from the results of the generated NETGEN instance classes summarized in \Cref{tabel2}.

\begin{table}[h!]\tiny
    \centering
    \caption{Numerical results for the different instance classes generated with NETGEN.}\label{tabel2}
    \vspace{0.4em}
    \begin{tabular}{ll r c c c c c c c c}
        \toprule
        & & & \multicolumn{2}{c}{Size {(Ratio)}} & \multicolumn{2}{c}{HVR $\uparrow$}  & \multicolumn{2}{c}{CE $\downarrow$} & \multicolumn{2}{c}{$I_\varepsilon$ $\downarrow$} \\
        \cmidrule(lr){4-5} \cmidrule(lr){6-7} \cmidrule(lr){8-9} \cmidrule(lr){10-11}
        Class & & \multicolumn{1}{c}{$|\Y_N|$} & $\Y_S$ & $\Y_{ES}$ & $\Y_S$ & $\Y_{ES}$ & $\Y_S$ & $\Y_{ES}$ & $\Y_S$ & $\Y_{ES}$ \\
        \midrule
        
        1     & min  & 3.0  & 0.234  & 0.006   & 0.972 & 0.213 & 0.0 & 0.289 & 1.0 & 1.009 \\
       $n=20,m=40$ & max  & 3269.0  & 1.0  & 0.667  & 1.0 & 0.881 & 0.187 & 0.5 & 1.005 & 1.111 \\
        & mean   & 361.0  & 0.803  & 0.156  & 0.997 & 0.645 & 0.059 & 0.396 & 1.002 & 1.046 \\
        \midrule
        
        2         & min  & 38.0  & 0.189  & 0.006   & 0.986 & 0.591 & 0.0 & 0.161 & 1.0 & 1.021 \\
        $n=20, m=80$& max  & 6952.0  & 1.0  & 0.158  & 1.0 & 0.943 & 0.222 & 0.464 & 1.009 & 1.085 \\
        & mean   & 1995.067  & 0.472  & 0.035  & 0.996 & 0.849 & 0.054 & 0.291 & 1.005 & 1.055 \\
        \midrule
        
        3       & min  & 3.0  & 0.226  & 0.02   & 0.977 & 0.634 & 0.0 & 0.15 & 1.0 & 1.002 \\
        $n=100,m=200$& max  & 2710.0  & 1.0  & 0.667  & 1.0 & 0.942 & 0.36 & 0.5 & 1.005 & 1.081 \\
        & mean   & 600.333  & 0.538  & 0.101  & 0.994 & 0.843 & 0.102 & 0.293 & 1.003 & 1.026 \\
        \midrule
        
        4          & min  & 150.0  & 0.152  & 0.005   & 0.989 & 0.786 & 0.027 & 0.12 & 1.003 & 1.011 \\
        $n=100,m=400$& max  & 15317.0  & 0.886  & 0.2  & 0.999 & 0.971 & 0.164 & 0.423 & 1.01 & 1.113 \\
        & mean   & 4123.933  & 0.363  & 0.034  & 0.996 & 0.911 & 0.079 & 0.245 & 1.006 & 1.056 \\
        \midrule
        
        5         & min  & 422.0  & 0.102  & 0.01   & 0.984 & 0.794 & 0.042 & 0.098 & 1.002 & 1.006 \\
        $n=1000,m=2000$& max  & 4252.0  & 0.391  & 0.081  & 0.998 & 0.983 & 0.173 & 0.418 & 1.004 & 1.038 \\
        & mean   & 1912.933  & 0.246  & 0.035  & 0.994 & 0.925 & 0.101 & 0.204 & 1.003 & 1.018 \\
        \midrule
        6            & min  & 176.0  & 0.109  & 0.002   & 0.993 & 0.76 & 0.019 & 0.077 & 1.004 & 1.022 \\
        $n=1000,m=4000$ & max  & 25648.0  & 0.591  & 0.119  & 0.998 & 0.98 & 0.133 & 0.388 & 1.012 & 1.249 \\
        & mean   & 8484.333  & 0.2  & 0.023  & 0.996 & 0.939 & 0.066 & 0.187 & 1.006 & 1.051 \\
 
        \midrule
        
        7         & min  & 49.0  & 0.078  & 0.007   & 0.985 & 0.397 & 0.0 & 0.105 & 1.0 & 1.006 \\
        $n=2000,m=4000$& max  & 10348.0  & 1.0  & 0.077  & 1.0 & 0.976 & 0.208 & 0.483 & 1.008 & 1.316 \\
        & mean   & 2646.8  & 0.391  & 0.039  & 0.995 & 0.867 & 0.068 & 0.257 & 1.003 & 1.06 \\
        \midrule
        
     $\boldsymbol{\varnothing}$ & \textbf{mean} & 2874.914 & 0.430 & 0.060 & 0.995 & 0.854 & 0.076 & 0.268 & 1.004 & 1.045 \\
        \bottomrule
    \end{tabular}
\end{table}

For instances with a high number of arcs, the supported set $\Y_S$ is substantially larger than the extreme supported set $\Y_{ES}$. On average across all classes, $\Y_S$ constitutes $43.0\%$ of the non-dominated set $\Y_N$, whereas $\Y_{ES}$ makes up only $6.0\%$. This cardinality advantage translates directly into superior representation quality. Across all instances, $\Y_S$ achieves an overall mean hypervolume ratio ($\operatorname{HVR}$) of $0.995$, compared to just $0.854$ for $\Y_{ES}$. Furthermore, the average coverage error ($\operatorname{CE}$) of $\Y_S$ ($0.076$) is $28.4\%$ of the error produced by $\Y_{ES}$ ($0.268$). This quality gap is also confirmed by the $\varepsilon$-indicator; the average excess error above $1.0$ for $\Y_S$ ($0.004$) is less than one-tenth of the excess error for $\Y_{ES}$ ($0.045$). Notably, the density of the network has a significant impact on the cardinality of $\Y_N$ and $\Y_S$. In classes where the arc-to-node ratio is high (e.\,g., classes 2, 4, 6), the absolute number of non-dominated points explodes, and the relative proportion of supported solutions ($|\Y_S|/|\Y_N|$) decreases compared to sparser networks.  However, the set of supported non-dominated points remains a high-quality representation with respect to the other considered quality measures.

\paragraph{Scaling the Number of Objectives}
\Cref{tab_dimensions} summarizes the results when varying the number of objectives from $p=3$ to $p=5$. Note that we use smaller network sizes ($n=10, m=20$) for these experiments due to computational limits.

\begin{table}[htb]\tiny
    \centering
    \caption{Numerical results for $(n=10,m=20)$ with different number of objectives. The classes again contain 15 instances.}\label{tab_dimensions}
    \vspace{0.4em}
    \begin{tabular}{ll r c c c c c c c c}
        & & & \multicolumn{2}{c}{Size {(Ratio)}} & \multicolumn{2}{c}{HVR $\uparrow$}  & \multicolumn{2}{c}{CE $\downarrow$} & \multicolumn{2}{c}{$I_\varepsilon$ $\downarrow$} \\
        \cmidrule(lr){4-5} \cmidrule(lr){6-7} \cmidrule(lr){8-9} \cmidrule(lr){10-11}
        & & \multicolumn{1}{c}{$|\Y_N|$} & $\Y_S$ & $\Y_{ES}$ & $\Y_S$ & $\Y_{ES}$ & $\Y_S$ & $\Y_{ES}$ & $\Y_S$ & $\Y_{ES}$ \\
        \midrule 
        
        $p=3$ & min  & 2.0  & 0.379  & 0.016   & 0.945 & 0.015 & 0.0 & 0.0 & 1.0 & 1.0 \\
        & max  & 1023.0  & 1.0  & 1.0  & 1.0 & 1.0 & 0.133 & 0.5 & 1.006 & 1.101 \\
        & mean   & 126.867  & 0.833  & 0.2  & 0.994 & 0.575 & 0.041 & 0.371 & 1.002 & 1.051 \\

        $p=4$   & min  & 2.0  & 0.452  & 0.008   & 0.991 & 0.002 & 0.0 & 0.0 & 1.0 & 1.0 \\
        & max  & 2743.0  & 1.0  & 1.0  & 1.0 & 1.0 & 0.316 & 0.519 & 1.006 & 1.113 \\
        & mean   & 296.267  & 0.82  & 0.169  & 0.998 & 0.689 & 0.071 & 0.389 & 1.002 & 1.049 \\
        
        $p=5$         & min  & 14.0  & 0.392  & 0.003   & 0.993 & 0.002 & 0.0 & 0.234 & 1.0 & 1.012 \\
        & max  & 15364.0  & 1.0  & 0.357  & 1.0 & 0.831 & 0.22 & 0.5 & 1.006 & 1.113 \\
        & mean   & 1211.467  & 0.779  & 0.098  & 0.998 & 0.639 & 0.08 & 0.427 & 1.003 & 1.06 \\
        \midrule
        
        $\boldsymbol{\varnothing}$ & \textbf{mean} & 544.867 & 0.829 & 0.159 & 0.997 & 0.636 & 0.056 & 0.393 & 1.002 & 1.053 \\
        \bottomrule
    \end{tabular}
\end{table}

\begin{figure}[ht]
    \centering
    \begin{minipage}[t]{0.48\textwidth}
        \centering
        \begin{tikzpicture}
            \begin{axis}[
                width=\linewidth,
                height=6cm,
                xlabel={},
                ylabel={},
                ymin=0.0, ymax=1.1, 
                hide obscured x ticks=false,
                xtick={1, 2, 3},
                xticklabels={$p=3$, $p=4$, $p=5$},
                x tick label style={font=\footnotesize},
                grid=major,
                legend style={
                    at={(0.5, -0.2)},
                    anchor=north,
                    legend columns=2,
                    font=\tiny,
                    draw=none
                },
                title={}
            ]

            \addplot[color=blue, thick, dashed, mark=*] coordinates {
                (1, 0.833) (2, 0.82) (3, 0.779)
            };
            \addlegendentry{Ratio $\frac{|\Y_S|}{|\Y_N|}$}

            \addplot[color=blue, thick, mark=*] coordinates {
                (1, 0.994) (2, 0.998) (3, 0.998)
            };
            \addlegendentry{HVR($\Y_S$)}

            \addplot[color=orange, thick, dashed, mark=square*] coordinates {
                (1, 0.200) (2, 0.169) (3, 0.098)
            };
            \addlegendentry{Ratio $\frac{|\Y_{ES}|}{|\Y_N|}$}

            \addplot[color=orange, thick, mark=square*] coordinates {
                (1, 0.575) (2, 0.689) (3, 0.639)
            };
            \addlegendentry{HVR($\Y_{ES}$)}

            \end{axis}
        \end{tikzpicture}
        \caption{Ratios \& HVR for~\Cref{tab_dimensions}.}
        \label{fig:ratio_hvr_dimensions}
    \end{minipage}
    \hfill
    \begin{minipage}[t]{0.48\textwidth}
        \centering
        \begin{tikzpicture}
            \begin{axis}[
                width=\linewidth,
                height=6cm,
                xlabel={},
                ylabel={},
                ymin=0.0, ymax=0.6,
                hide obscured x ticks=false,
                xtick={1, 2, 3},
                xticklabels={$p=3$, $p=4$, $p=5$},
                x tick label style={font=\footnotesize},
                grid=major,
                legend style={
                    at={(0.5, -0.25)},
                    anchor=north,
                    legend columns=2,
                    font=\tiny,
                    draw=none
                },
                title={}
            ]

            \addplot[name path=CE_Smin, draw=none, forget plot] coordinates {
                (1, 0.000) (2, 0.000) (3, 0.000)
            };
            \addplot[name path=CE_Smax, draw=none, forget plot] coordinates {
                (1, 0.133) (2, 0.316) (3, 0.220)
            };
            \addplot[blue!20, opacity=0.5, area legend] fill between[of=CE_Smin and CE_Smax];
            \addlegendentry{Range $\operatorname{CE}(\Y_S)$}

            \addplot[color=blue, thick, mark=*] coordinates {
                (1, 0.041) (2, 0.071) (3, 0.08)
            };
            \addlegendentry{Mean $\operatorname{CE}(\Y_S)$}

            \addplot[name path=CE_ESmin, draw=none, forget plot] coordinates {
                (1, 0.000) (2, 0.000) (3, 0.234)
            };
            \addplot[name path=CE_ESmax, draw=none, forget plot] coordinates {
                (1, 0.500) (2, 0.519) (3, 0.500)
            };
            \addplot[orange!30, opacity=0.5, area legend] fill between[of=CE_ESmin and CE_ESmax];
            \addlegendentry{Range $\operatorname{CE}(\Y_{ES})$}

            \addplot[color=orange, thick, mark=square*] coordinates {
                (1, 0.371) (2, 0.389) (3, 0.427)
            };
            \addlegendentry{Mean $\operatorname{CE}(\Y_{ES})$}

            \end{axis}
        \end{tikzpicture}
        \caption{coverage error for~\Cref{tab_dimensions}.}
        \label{fig:ce_dimensions}
    \end{minipage}
\end{figure}

Indeed, the quality advantage of $\Y_S$ persists in higher dimensions. Both, the ratios of supported non-dominated to all non-dominated points and the hypervolume ratios show only minimal variation. For $p=5$, $\Y_S$ covers $99.8\%$ of the hypervolume, whereas $\Y_{ES}$ covers only $64\%$. While there is a slight increase in the coverage error for the sets $\Y_S$ with increasing dimensions, it remains relatively small, with a mean value of just $0.080$ even in the case of $p = 5$.  The coverage error of $\Y_{ES}$ ($0.427$) is more than five times higher than that of $\Y_S$ ($0.080$).  Similarly, the $\varepsilon$-indicator emphasizes this performance gap. While $\Y_S$ maintains a near-optimal average of $1.002$ across all dimensions, the average excess error of $\Y_{ES}$ (above $1.0$) is more than $26$ times larger than that of $\Y_S$. These observations suggest that the effectiveness of supported non-dominated points as high-quality representations persists even in higher-dimensional settings.

\paragraph{High Number of Supply and Demand Nodes} 
The final test class consists of NETGEN benchmark instances constructed using parameters based on the test sets N01, N04, N07, and N12 from~\cite{raith09}.
As detailed in \Cref{tab_values_demand}, these instances feature a larger supply value and a greater number of source and sink nodes than the previous test cases. Consequently, due to higher  supply and fewer transshipment nodes, the ratio of supported to total non-dominated points ($|\Y_S|/|\Y_N|$) is expected to be lower. For all instances, the parameters $\text{\texttt{mincost}}=0$, $\text{\texttt{maxcost}}=100$, $\text{\texttt{capacitated}}=100$, $\text{\texttt{mincap}}=0$, and $\text{\texttt{maxcap}}=50$ were fixed, generating 15 instances per class.

\Cref{tab_demand} and~\Cref{tab_demand_3} present the results for the bi-objective and tri-objective cases, respectively. Notably, for $p=3$, we were only able to generate results for classes N01 and N04. In the test classes N07 and N12, the number of non-dominated points increases drastically, making the computation too demanding in our experimental setting.

\begin{table}[htb]\footnotesize
	\centering
	\caption{NETGEN parameters for the different classes with more supply, sources and sinks, given as in~\cite{raith09}.}\label{tab_values_demand}
	\vspace{0.5cm}
\begin{tabular}{lccccccc}
	\toprule Name & $n$ & $m$ & \texttt{Sources} & \texttt{Sinks} & \texttt{Supply} & \parbox{4.5em}{\texttt{Transship.\newline sources}} & \parbox{4.5em}{\texttt{Transship. sinks}} \\
	\midrule N01 & 20 & 60 & 9 & 7 & 90 & 4 & 3 \\
	\midrule N04 & 40 & 120 & 18 & 14 & 180 & 9 & 7 \\
	\midrule N07 & 60 & 180 & 27 & 21 & 270& 14 & 10 \\
	\midrule N12 & 80 & 400 & 35 & 38 & 350 & 17 & 14 \\
	\bottomrule
\end{tabular}
\end{table}

\begin{table}[htb]\tiny
    \centering
    \caption{Numerical results for bi-objective instances with more supply, sinks, and sources. The classes are similar to the one from~\citet{raith09}. The classes again contain 15 instances.}\label{tab_demand}
    \vspace{0.4em}
    \begin{tabular}{ll r c c c c c c c c}\toprule
        & & & \multicolumn{2}{c}{Size {(Ratio)}} & \multicolumn{2}{c}{HVR $\uparrow$}  & \multicolumn{2}{c}{CE $\downarrow$} & \multicolumn{2}{c}{$I_\varepsilon$ $\downarrow$} \\
        \cmidrule(lr){4-5} \cmidrule(lr){6-7} \cmidrule(lr){8-9} \cmidrule(lr){10-11}
        Class & & \multicolumn{1}{c}{$|\Y_N|$} & $\Y_S$ & $\Y_{ES}$ & $\Y_S$ & $\Y_{ES}$ & $\Y_S$ & $\Y_{ES}$ & $\Y_S$ & $\Y_{ES}$ \\
        \midrule 
        
        N01     & min  & 29.0  & 0.068  & 0.02   & 0.948 & 0.805 & 0.023 & 0.098 & 1.002 & 1.01 \\
        & max  & 353.0  & 0.655  & 0.31  & 0.999 & 0.98 & 0.084 & 0.361 & 1.021 & 1.082 \\
        & mean   & 162.867  & 0.258  & 0.098  & 0.988 & 0.918 & 0.043 & 0.174 & 1.009 & 1.042 \\

        \midrule
        
        N04 & min  &  120.000 & 0.103 & 0.033 & 0.990 & 0.933 & 0.021 & 0.068 & 1.002 & 1.012 \\
            & max  &  819.000 & 0.271 & 0.133 & 0.997 & 0.982 & 0.051 & 0.274 & 1.010 & 1.035 \\
            & mean &  388.533 & 0.175 & 0.069 & 0.993 & 0.965 & 0.029 & 0.115 & 1.006 & 1.022 \\
        \midrule
        
       N07 & min  &  452.000 & 0.080 & 0.030 & 0.993 & 0.973 & 0.018 & 0.048 & 1.003 & 1.010 \\
            & max  & 1455.000 & 0.195 & 0.086 & 0.996 & 0.987 & 0.030 & 0.124 & 1.008 & 1.030 \\
            & mean &  736.067 & 0.123 & 0.058 & 0.995 & 0.982 & 0.023 & 0.073 & 1.006 & 1.018 \\
        \midrule
        
        N12 & min  & 1192.000 & 0.063 & 0.034 & 0.995 & 0.984 & 0.010 & 0.028 & 1.004 & 1.009 \\
            & max  & 2577.000 & 0.099 & 0.059 & 0.998 & 0.995 & 0.023 & 0.060 & 1.007 & 1.033 \\
            & mean & 1895.533 & 0.076 & 0.043 & 0.997 & 0.992 & 0.017 & 0.045 & 1.006 & 1.016 \\
        \midrule
        
        $\boldsymbol{\varnothing}$ & \textbf{mean} & 795.750 & 0.158 & 0.067 & 0.993 & 0.964 & 0.028 & 0.102 & 1.0070 & 1.025 \\
        \bottomrule
    \end{tabular}
\end{table}

\begin{figure}[ht]
    \centering
    \begin{minipage}[t]{0.48\textwidth}
        \centering
        \begin{tikzpicture}
            \begin{axis}[
                width=\linewidth,
                height=6cm,
                xlabel={},
                ylabel={},
                ymin=0.0, ymax=1.1, 
                hide obscured x ticks=false,
                xtick={1, 2, 3, 4},
                xticklabels={N01, N04, N07, N12},
                x tick label style={font=\footnotesize},
                grid=major,
                legend style={
                    at={(0.5, -0.2)},
                    anchor=north,
                    legend columns=2,
                    font=\tiny,
                    draw=none
                },
                title={}
            ]

            \addplot[color=blue, thick, dashed, mark=*] coordinates {
                (1, 0.258) (2, 0.175) (3, 0.123) (4, 0.076)
            };
            \addlegendentry{Ratio $\frac{|\Y_S|}{|\Y_N|}$}

            \addplot[color=blue, thick, mark=*] coordinates {
                (1, 0.988) (2, 0.993) (3, 0.995) (4, 0.997)
            };
            \addlegendentry{HVR($\Y_S$)}

            \addplot[color=orange, thick, dashed, mark=square*] coordinates {
                (1, 0.098) (2, 0.069) (3, 0.058) (4, 0.043)
            };
            \addlegendentry{Ratio $\frac{|\Y_{ES}|}{|\Y_N|}$}

            \addplot[color=orange, thick, mark=square*] coordinates {
                (1, 0.918) (2, 0.965) (3, 0.982) (4, 0.992)
            };
            \addlegendentry{HVR($\Y_{ES}$)}

            \end{axis}
        \end{tikzpicture}
        \caption{Ratios \& HVR, increased supply}
        \label{fig:ratio_hvr_supply}
    \end{minipage}
    \hfill
    \begin{minipage}[t]{0.48\textwidth}
        \centering
        \begin{tikzpicture}
            \begin{axis}[
                width=\linewidth,
                height=6cm,
                xlabel={},
                ylabel={},
                ymin=0.0, ymax=0.45,
                hide obscured x ticks=false,
                xtick={1, 2, 3, 4},
                xticklabels={N01, N04, N07, N12},
                x tick label style={font=\footnotesize},
                grid=major,
                legend style={
                    at={(0.5, -0.25)},
                    anchor=north,
                    legend columns=2,
                    font=\tiny,
                    draw=none
                },
                title={}
            ]

            \addplot[name path=CE_Smin, draw=none, forget plot] coordinates {
                (1, 0.023) (2, 0.021) (3, 0.018) (4, 0.010)
            };
            \addplot[name path=CE_Smax, draw=none, forget plot] coordinates {
                (1, 0.084) (2, 0.051) (3, 0.030) (4, 0.023)
            };
            \addplot[blue!20, opacity=0.5, area legend] fill between[of=CE_Smin and CE_Smax];
            \addlegendentry{Range $\operatorname{CE}(\Y_S)$}

            \addplot[color=blue, thick, mark=*] coordinates {
                (1, 0.043) (2, 0.029) (3, 0.023) (4, 0.017)
            };
            \addlegendentry{Mean $\operatorname{CE}(\Y_S)$}

            \addplot[name path=CE_ESmin, draw=none, forget plot] coordinates {
                (1, 0.098) (2, 0.068) (3, 0.048) (4, 0.028)
            };
            \addplot[name path=CE_ESmax, draw=none, forget plot] coordinates {
                (1, 0.361) (2, 0.274) (3, 0.124) (4, 0.060)
            };
            \addplot[orange!30, opacity=0.5, area legend] fill between[of=CE_ESmin and CE_ESmax];
            \addlegendentry{Range $\operatorname{CE}(\Y_{ES})$}

            \addplot[color=orange, thick, mark=square*] coordinates {
                (1, 0.174) (2, 0.115) (3, 0.073) (4, 0.045)
            };
            \addlegendentry{Mean $\operatorname{CE}(\Y_{ES})$}

            \end{axis}
        \end{tikzpicture}
        \caption{coverage error}
        \label{fig:ce_supply}
    \end{minipage}
\end{figure}

\begin{table}[htb]\tiny
    \centering
    \caption{Numerical results for same instances as in~\Cref{tab_demand} but with 3 objectives.}\label{tab_demand_3}
    \vspace{0.4em}
    \begin{tabular}{ll r c c c c c c c c}
        \toprule
        & & & \multicolumn{2}{c}{Size {(Ratio)}} & \multicolumn{2}{c}{HVR $\uparrow$}  & \multicolumn{2}{c}{CE $\downarrow$} & \multicolumn{2}{c}{$I_\varepsilon$ $\downarrow$} \\
        \cmidrule(lr){4-5} \cmidrule(lr){6-7} \cmidrule(lr){8-9} \cmidrule(lr){10-11}
        Class & & \multicolumn{1}{c}{$|\Y_N|$} & $\Y_S$ & $\Y_{ES}$ & $\Y_S$ & $\Y_{ES}$ & $\Y_S$ & $\Y_{ES}$ & $\Y_S$ & $\Y_{ES}$ \\
        \midrule 
        
        N01 & min  &   512.000 & 0.075 & 0.009 & 0.983 & 0.828 & 0.072 & 0.122 & 1.007 & 1.029 \\
            & max  &  8443.000 & 0.311 & 0.055 & 0.998 & 0.971 & 0.262 & 0.262 & 1.017 & 1.103 \\
            & mean &  3647.467 & 0.154 & 0.022 & 0.991 & 0.915 & 0.119 & 0.209 & 1.011 & 1.056 \\
        \midrule
        
        N04 & min  &  2945.000 & 0.026 & 0.005 & 0.991 & 0.947 & 0.039 & 0.090 & 1.004 & 1.016 \\
            & max  & 88732.000 & 0.146 & 0.026 & 0.996 & 0.979 & 0.160 & 0.231 & 1.010 & 1.044 \\
            & mean & 37100.533 & 0.064 & 0.012 & 0.993 & 0.964 & 0.090 & 0.126 & 1.007 & 1.027 \\
        \midrule
        
        $\boldsymbol{\varnothing}$ & \textbf{mean} & 20374.000 & 0.109 & 0.017 & 0.992 & 0.940 & 0.105 & 0.168 & 1.009 & 1.042 \\
        \bottomrule
    \end{tabular}
\end{table}

Focusing on the bi-objective test classes for which all instances could be solved (\Cref{tab_demand}), the data confirms the presumed drop in the ratio of supported points. The mean proportion $|\Y_S|/|\Y_N|$ decreases from 0.258 in N01 to just 0.076 in N12. Remarkably, as the problem scales and $\Y_S$ becomes a fractionally smaller subset, its representation quality actually improves. \Cref{fig:ce_supply} illustrates that the mean coverage error for $\Y_S$ decreases from 0.043 (N01) to 0.017 (N12), maintaining an exceptional overall mean $\operatorname{HVR}$ of 0.993. In contrast, while the extreme supported set $\Y_{ES}$ achieves a respectable overall mean $\operatorname{HVR}$ of 0.964 in these instances, it remains distinctly outperformed by $\Y_S$. Across all classes, the average coverage error of $\Y_{ES}$ (0.102) is roughly 3.6 times larger than that of $\Y_S$ (0.028). The $\varepsilon$-indicator further underscores this gap: the average excess error above 1.0 for $\Y_{ES}$ (0.025) is more than 3.5 times larger than the excess error for $\Y_S$ (0.007).

Ultimately, these results confirm that even as problem complexity scales and the relative size of the supported set shrinks drastically, $\Y_S$ remains a highly robust, reliable, and superior representation compared to $\Y_{ES}$.

\paragraph{Additional Evaluations} Further experiments investigating the effects of capacity scaling, cost correlations, network density, and alternative graph dimensions yield consistent results. Detailed findings from these analyzes can be found in~\cite{koenenphd}, confirming the robust representation quality of $\mathcal{Y}_S$ across diverse network topologies. Additional instances with results can be found in the associated repository~\citep{koenen2026supprepresentations}.

\subsection{Summary of Results}

The numerical experiments establish a clear structural dichotomy among the considered network optimization problems. For binary and low-capacity network problems, our results are consistent with the observations in~\cite{Serpil2024}: the sets $\Y_S$ and $\Y_{ES}$ are nearly identical across all considered test instances, and, moreover, $\Y_{ES}$ provides a high-quality representation with respect to several quality indicators. 

For capacitated flow problems with larger arc capacities, however, we obtain different results. The set of supported non-dominated points is significantly larger than the set of extreme supported points, which can be explained by the underlying structural properties of such networks, as discussed in~\Cref{sec:NetworkProb}. Moreover, the representation quality of $\Y_{ES}$ deteriorates, whereas $\Y_S$ continues to provide high-quality representations across all considered instances.

These results confirm that the behavior observed for knapsack problems does not generalize to capacitated network flow problems, highlighting the benefits of using $\Y_S$ for effective decision support in flow-based applications. Notably, we also considered bounded knapsack instances with increased capacities, allowing items to be selected multiple times. In contrast to network flow problems, increasing capacities did not affect the relationship between $\Y_S$ and $\Y_{ES}$. The two sets remained nearly identical in size, and $\Y_{ES}$ continued to provide representations of comparable quality. This suggests that the substantial expansion of $\Y_S$ relative to $\Y_{ES}$ is not merely a consequence of larger integer domains, but rather a structural property.

Overall, our results suggest that supported non-dominated points $\Y_S$ consistently provide high-quality representations across all considered problem classes. While $\Y_{ES}$ performs well for binary and low-capacity problems, its quality deteriorates as network complexity increases. In contrast, $\Y_S$ maintains a high representation quality across a wide range of network structures, including sparse and dense topologies, varying capacity levels, and higher-dimensional instances.

\section{Quality of Fixed-Size Representations}\label{sec:FixedSize}

The cardinality of a representation is a crucial quality indicator in practical applications, which we have not yet taken into account. Large sets of alternative solutions are often unmanageable for decision makers (DMs), as evaluating hundreds of alternatives leads to choice paralysis. Consequently, DMs generally prefer a concise, fixed-size representation, typically ranging from $k=5$ to $k=20$ \citep{HUMBERTROPERS2025}.

\paragraph{Representation Approaches}
To construct fixed-size representations, the literature generally distinguishes between \emph{a~priori} and \emph{a posteriori} (or filtering) methods. A priori methods generate representative subsets of a set $S$ directly without first computing the entire set $S$ (e.\,g., \citealt{hamacher07,Sylva2007,eusebio14,kidd2020,Kirlik2025}). As these methods construct the representation (often incrementally) without full knowledge of the objective space, they generally cannot guarantee the selection of an optimal subset with respect to specific quality metrics. Indeed, selecting a representative subset of fixed size $k$ that maximizes the hypervolume indicator is theoretically possible without full knowledge of $\Y_N$, but it is often computationally too expensive in practice.

Filtering algorithms, in contrast, guarantee optimality by solving a \emph{subset selection problem} that identifies an optimal subset of size $k$ from a known candidate set $S$ with respect to a specific indicator (e.\,g., minimizing coverage error). \cite{vaz15representation} propose several polynomial-time algorithms to find optimal subsets of $\Y_N$ for coverage, uniformity, and the $\varepsilon$-indicator. Similar approaches exist for the hypervolume subset selection problem \citep{kuhn2016,gomes2018,Groz2019,guerreiro2021}.
A survey of these methods is provided in~\cite{guerreiro2021}. In the following, we focus on approaches where only a single quality indicator is optimized.

\paragraph{A Two-Phase Selection Strategy} 

While filtering algorithms yield optimal subsets with respect to a specific quality indicator, they require generating the complete, often intractable set $\Y_{N}$ in advance, only to discard the vast majority of its points during the filtering step. This process is computationally demanding, both  in generating $\Y_N$ itself and in solving the subset selection problem on a potentially massive input set.

To overcome this computational burden, we propose a new two-phase strategy. Instead of generating and filtering the entire set $\Y_{N}$, we construct representations by directly selecting $k$ points from the (extreme) supported non-dominated points. As demonstrated in \Cref{sec:RepQualSuppNonPoint}, these sets consistently yield high-quality representations of $\Y_{N}$, depending on the given problem structure. However, their cardinality often exceeds a manageable number of solutions.

Therefore, rather than presenting them directly to the DM or filtering the full set $\Y_N$, we  use $\Y_S$ or $\Y_{ES}$ as candidate sets for filtering, which offers the following two advantages: First, the (extreme) supported points are typically much easier to compute than the full set $\Y_{N}$. Second, the sizes of the sets $\Y_S$ or $\Y_{ES}$ are significantly smaller. As a result, the subsequent filtering step requires only a fraction of the computational effort. 

The core research question we address in this chapter is therefore: \emph{How much representation quality is lost when restricting the candidate set to $\Y_S$ or $\Y_{ES}$?}

\subsection{Problem Definition and Relative Quality Metrics.}

The subset selection problem involves finding a subset ${R}^*_k(S) \subseteq S$ of size $k$ from a candidate set $S$ that optimizes a specific quality indicator $I$. Depending on the quality measure applied, this takes different forms: 

\begin{equation*}
    {R}^*_k(S) = 
    \begin{cases} 
        \underset{{R} \subseteq S, \; |{R}| = k}{\arg\max} \; I({R}) & \text{for maximization metrics (e.\,g., uniformity)} \\[15pt]
        \underset{{R} \subseteq S,|{R}| = k}{\arg\min} \; \; I({R}) & \text{for minimization metrics (e.\,g., coverage error)}
    \end{cases}
\end{equation*}

To quantify the trade-off between the quality of the subset selected from the restricted set $S$ and the optimal subset chosen from the  entire space $\Y_N$, we define  the \emph{Relative Quality Ratio (RQR)}:
\begin{equation*}
    \operatorname{RQR}_k^I(S) = 
    \begin{cases} 
        \frac{I({R}^*_k(S))}{I({R}^*_k{(\Y_N}))} & \text{for maximization metrics (e.\,g., uniformity)} \\[10pt]
        \frac{I({R}^*_k{(\Y_N}))}{I({R}^*_k{(S))}} & \text{for minimization metrics (e.\,g., coverage error)}
    \end{cases}
\end{equation*}

Because $S \subseteq \Y_{N}$, the RQR is strictly bounded between 0 and 1. An RQR value approaching 1.0 indicates that the optimal $k$-element subset drawn from the restricted set $S$ performs nearly as well as the best possible subset drawn from the entire space $\Y_{N}$. This metric allows us to precisely quantify the representation quality retained when restricting the candidate set to $\Y_{S}$ or $\Y_{ES}$. Ultimately, a high RQR proves that the restricted search space contains a representation that is practically indistinguishable from the theoretical optimum.

\subsection{Experimental Setup}

To validate the proposed two-phase selection strategy, we conducted numerical experiments using bi-objective network optimization instances described in~\Cref{sec:RepQualSuppNonPoint}. To identify optimal subsets $R^*_k(S)$ with respect to different quality measures, we utilized the open-source implementation by~\cite{moqm}, which relies on the exact dynamic programming approaches proposed by~\cite{vaz15representation}.

We determined optimal subsets with respect to coverage, uniformity, hypervolume, and the $\varepsilon$-indicator, drawn from candidate sets $S \in \{\Y_{ES}, \Y_{S}, \Y_N\}$ for sizes $k \in \{5,10,15,20\}$. We then computed the RQR to compare the fidelity of subsets chosen from the (extreme) supported points against those chosen from the entire set $\Y_N$.

The tables and figures below present results for the following representative test classes: capacitated minimum cost flow instances (N07, N12); a bi-objective~\ref{eq:MOIMCF} instance class featuring a small number of supply and sink nodes but a higher number of nodes, arcs, and cost coefficients (NG\_400\_2500\_c50, $n=400$ nodes, $m=2500$ arcs, cost coefficients $c^1, c^2 \in [1,50]$, with the remaining NETGEN parameters as described in~\Cref{sec:4.1}); and a binary shortest path instance (SP\_500\_1300\_c100, $n=500$ nodes, $m=1300$ arcs, cost coefficients $c^1, c^2 \in [1,100]$). Results for multi-objective minimum spanning tree and bi-objective cable trench test classes closely mirror the behavior of~\ref{eq:MOSP} test instances and are omitted for simplicity, though full data is available in the associated repository~\citep{koenen2026supprepresentations}.

\begin{figure}[ht]
    \centering
    \begin{minipage}[c]{0.45\textwidth}
        \centering
           \captionof{table}{Quality Metrics over $k$ }\label{tab:metrics_side}
        \resizebox{1.1\textwidth}{!}{
        \tiny
    \begin{tabular}{l cc  cc  cc  cc}
        \toprule
        & \multicolumn{2}{c}{\textbf{RQR$_k^{\operatorname{HV}}$}} & \multicolumn{2}{c}{\textbf{RQR$_k^{\operatorname{CE}}$}} & \multicolumn{2}{c}{\textbf{RQR$_k^{I_\varepsilon}$}} & \multicolumn{2}{c}{\textbf{RQR$_k^{\operatorname{UN}}$}} \\
        \cmidrule(lr){2-3} \cmidrule(lr){4-5} \cmidrule(lr){6-7} \cmidrule(l){8-9}
        $k$ & $\mathbf{\Y_S}$ & $\mathbf{\Y_{ES}}$ & $\mathbf{\Y_S}$ & $\mathbf{\Y_{ES}}$ & $\mathbf{\Y_S}$ & $\mathbf{\Y_{ES}}$ & $\mathbf{\Y_S}$ & $\mathbf{\Y_{ES}}$ \\
        \midrule
        $5$  & 1.000 & 0.999 & 0.945 & 0.781 & 0.998 & 0.990 & 0.983 & 0.915 \\
        $10$ & 1.000 & 0.999 & 0.872 & 0.647 & 0.997 & 0.991 & 0.943 & 0.820 \\
        $15$ & 1.000 & 0.998 & 0.792 & 0.541 & 0.997 & 0.989 & 0.911 & 0.727 \\
        $20$ & 1.000 & 0.996 & 0.753 & 0.484 & 0.997 & 0.988 & 0.876 & 0.630 \\
        \midrule
        $\boldsymbol{\varnothing}$ & 1.000 & 0.998 & 0.841 & 0.613 & 0.997 & 0.990 & 0.928 & 0.773 \\
        \bottomrule
    \end{tabular}}
    \vspace{1.5em}
            \captionof{table}{Instance N07}
        \label{tab:instance_sizes}
        \resizebox{0.5\textwidth}{!}{
            \begin{tabular}{lrrr}
            \toprule
            Stat & $|\Y_N|$ & $|\Y_S|$ & $|\Y_E|$ \\
            \midrule
            min & 452 & 66 & 33 \\
            max & 1455 & 117 & 49 \\
            mean & 736.07 & 85.20 & 40.07 \\
            \bottomrule
            \end{tabular}
        }

    \end{minipage}
    \hfill 
    \begin{minipage}[c]{0.5\textwidth}
        \centering

        \begin{tikzpicture}
            \begin{axis}[
                width=\linewidth,
                height=7cm,
                xlabel={}, 
                ylabel={},
                ymin=0.0, ymax=1.0,
                hide obscured x ticks=false,
                xtick={5, 10, 15, 20},
                xticklabels={$k=5$, $k=10$, $k=15$, $k=20$},
                grid=major,
                legend style={
                    at={(0.5, 1.03)}, 
                    anchor=south,    
                    legend columns=2, 
                    font=\tiny,
                    draw=none,
                    /tikz/every even column/.append style={column sep=0.3cm}
                },
                title style={yshift=1.5ex},
                title={}
            ]

            \addplot[name path=Smin, draw=none, forget plot] coordinates {
                (5, 0.9086) (10, 0.7753) (15, 0.6827) (20, 0.6342)
            };
            \addplot[name path=Smax, draw=none, forget plot] coordinates {
                (5, 0.9817) (10, 0.9328) (15, 0.8805) (20, 0.8821)
            };
            
            \addplot[blue!20, opacity=0.5, area legend] fill between[of=Smin and Smax];
            \addlegendentry{Range $\operatorname{RQR}^{\operatorname{CE}}_k(\Y_{S})$}
            
            \addplot[color=blue, thick, mark=*] coordinates {
                (5, 0.9447) (10, 0.8722) (15, 0.7921) (20, 0.7534)
            };
            \addlegendentry{Mean $\operatorname{RQR}^{\operatorname{CE}}_k(\Y_{S})$}

            \addplot[name path=ESmin, draw=none, forget plot] coordinates {
                (5, 0.6386) (10, 0.4534) (15, 0.4270) (20, 0.3114)
            };
            \addplot[name path=ESmax, draw=none, forget plot] coordinates {
                (5, 0.9371) (10, 0.8444) (15, 0.7276) (20, 0.7472)
            };
            
            \addplot[orange!30, opacity=0.5, area legend] fill between[of=ESmin and ESmax];
            \addlegendentry{Range$\operatorname{RQR}^{\operatorname{CE}}_k(\Y_{ES})$}

            \addplot[color=orange, thick, mark=square*] coordinates {
                (5, 0.7808) (10, 0.6472) (15, 0.5406) (20, 0.4842)
            };
            \addlegendentry{Mean $\operatorname{RQR}^{\operatorname{CE}}_k(\Y_{ES})$}

            \end{axis}
        \end{tikzpicture}
        \captionof{figure}{Comparison of RQR metrics}
        \label{fig:metrics_plot}
    \end{minipage}
\end{figure}

\begin{figure}[ht]
    \centering
    \begin{minipage}[c]{0.45\textwidth}
        \centering
        \captionof{table}{Quality Metrics over $k$}
        \label{tab:metrics_side_n12}
        \resizebox{1.1\textwidth}{!}{
        \tiny
        \begin{tabular}{l cc  cc  cc  cc}
            \toprule
            & \multicolumn{2}{c}{\textbf{RQR$_k^{\operatorname{HV}}$}} & \multicolumn{2}{c}{\textbf{RQR$_k^{\operatorname{CE}}$}} & \multicolumn{2}{c}{\textbf{RQR$_k^{I_\varepsilon}$}} & \multicolumn{2}{c}{\textbf{RQR$_k^{\operatorname{UN}}$}} \\
            \cmidrule(lr){2-3} \cmidrule(lr){4-5} \cmidrule(lr){6-7} \cmidrule(l){8-9}
            $k$ & $\mathbf{\mathcal{Y}_S}$ & $\mathbf{\mathcal{Y}_{ES}}$ & $\mathbf{\mathcal{Y}_S}$ & $\mathbf{\mathcal{Y}_{ES}}$ & $\mathbf{\mathcal{Y}_S}$ & $\mathbf{\mathcal{Y}_{ES}}$ & $\mathbf{\mathcal{Y}_S}$ & $\mathbf{\mathcal{Y}_{ES}}$ \\
            \midrule
            $5$  & 0.924 & 0.924 & 0.954 & 0.892 & 0.997 & 0.992 & 0.988 & 0.977 \\
            $10$ & 0.962 & 0.962 & 0.900 & 0.763 & 0.997 & 0.992 & 0.969 & 0.920 \\
            $15$ & 0.975 & 0.975 & 0.853 & 0.677 & 0.997 & 0.991 & 0.939 & 0.864 \\
            $20$ & 0.982 & 0.981 & 0.814 & 0.600 & 0.997 & 0.991 & 0.924 & 0.813 \\
            \midrule
            $\boldsymbol{\varnothing}$ & 0.961 & 0.960 & 0.880 & 0.733 & 0.997 & 0.991 & 0.955 & 0.893 \\
            \bottomrule
        \end{tabular}}
        \vspace{1.5em}
        \captionof{table}{Instance N12}
        \label{tab:instance_sizes_n12}
        \resizebox{0.5\textwidth}{!}{
            \begin{tabular}{lrrr}
            \toprule
            Stat & $|\mathcal{Y}_N|$ & $|\mathcal{Y}_S|$ & $|\mathcal{Y}_E|$ \\
            \midrule
            min & 1192 & 96 & 65 \\
            max & 2577 & 188 & 95 \\
            mean & 1895.53 & 142.87 & 79.67 \\
            \bottomrule
            \end{tabular}
        }
    \end{minipage}
    \hfill 
    \begin{minipage}[c]{0.5\textwidth}
        \centering
        
        \begin{tikzpicture}
            \begin{axis}[
                width=\linewidth,
                height=7cm,
                xlabel={}, 
                ylabel={},
                ymin=0.0, ymax=1.0,
                hide obscured x ticks=false,
                xtick={5, 10, 15, 20},
                xticklabels={$k=5$, $k=10$, $k=15$, $k=20$},
                grid=major,
                legend style={
                    at={(0.5, 1.03)}, 
                    anchor=south,    
                    legend columns=2, 
                    font=\tiny,
                    draw=none,
                    /tikz/every even column/.append style={column sep=0.3cm}
                },
                title style={yshift=1.5ex},
                title={}
            ]
            \addplot[name path=Smin, draw=none, forget plot] coordinates {
                (5, 0.9181) (10, 0.8154) (15, 0.7696) (20, 0.6954)
            };
            \addplot[name path=Smax, draw=none, forget plot] coordinates {
                (5, 0.9892) (10, 0.9434) (15, 0.9115) (20, 0.8805)
            };
            \addplot[blue!20, opacity=0.5, area legend] fill between[of=Smin and Smax];
            \addlegendentry{Range $\operatorname{RQR}^{\operatorname{CE}}_k(\mathcal{Y}_{S})$}
            \addplot[color=blue, thick, mark=*] coordinates {
                (5, 0.9543) (10, 0.8995) (15, 0.8528) (20, 0.8136)
            };
            \addlegendentry{Mean $\operatorname{RQR}^{\operatorname{CE}}_k(\mathcal{Y}_{S})$}

            \addplot[name path=ESmin, draw=none, forget plot] coordinates {
                (5, 0.8069) (10, 0.6143) (15, 0.5749) (20, 0.4963)
            };
            \addplot[name path=ESmax, draw=none, forget plot] coordinates {
                (5, 0.9606) (10, 0.8660) (15, 0.8114) (20, 0.7273)
            };
            \addplot[orange!30, opacity=0.5, area legend] fill between[of=ESmin and ESmax];
            \addlegendentry{Range $\operatorname{RQR}^{\operatorname{CE}}_k(\mathcal{Y}_{ES})$}
            \addplot[color=orange, thick, mark=square*] coordinates {
                (5, 0.8920) (10, 0.7631) (15, 0.6771) (20, 0.6004)
            };
            \addlegendentry{Mean $\operatorname{RQR}^{\operatorname{CE}}_k(\mathcal{Y}_{ES})$}
            \end{axis}
        \end{tikzpicture}
        \captionof{figure}{Comparison of RQR metrics}
        \label{fig:metrics_plot_n12}
    \end{minipage}
\end{figure}

\begin{figure}[ht]
    \centering
    \begin{minipage}[c]{0.45\textwidth}
        \centering
        \captionof{table}{Quality Metrics over $k$}
        \label{tab:metrics_side_400_2500_supp}
        \resizebox{1.1\textwidth}{!}{
        \tiny
        \begin{tabular}{l cc  cc  cc  cc}
            \toprule
            & \multicolumn{2}{c}{\textbf{RQR$_k^{\operatorname{HV}}$}} & \multicolumn{2}{c}{\textbf{RQR$_k^{\operatorname{CE}}$}} & \multicolumn{2}{c}{\textbf{RQR$_k^{I_\varepsilon}$}} & \multicolumn{2}{c}{\textbf{RQR$_k^{\operatorname{UN}}$}} \\
            \cmidrule(lr){2-3} \cmidrule(lr){4-5} \cmidrule(lr){6-7} \cmidrule(l){8-9}
            $k$ & $\mathbf{\mathcal{Y}_S}$ & $\mathbf{\mathcal{Y}_{ES}}$ & $\mathbf{\mathcal{Y}_S}$ & $\mathbf{\mathcal{Y}_{ES}}$ & $\mathbf{\mathcal{Y}_S}$ & $\mathbf{\mathcal{Y}_{ES}}$ & $\mathbf{\mathcal{Y}_S}$ & $\mathbf{\mathcal{Y}_{ES}}$ \\
            \midrule
            $5$  & 0.930 & 0.930 & 0.959 & 0.730 & 0.997 & 0.975 & 0.988 & 0.910 \\
            $10$ & 0.966 & 0.963 & 0.893 & 0.599 & 0.997 & 0.973 & 0.962 & 0.745 \\
            $15$ & 0.978 & 0.972 & 0.834 & 0.460 & 0.997 & 0.969 & 0.942 & 0.545 \\
            $20$ & 0.984 & 0.975 & 0.798 & 0.361 & 0.997 & 0.966 & 0.928 & 0.402 \\
            \midrule
            $\boldsymbol{\varnothing}$ & 0.964 & 0.960 & 0.871 & 0.538 & 0.997 & 0.971 & 0.955 & 0.651 \\
            \bottomrule
        \end{tabular}}
        \vspace{1.5em}
        \captionof{table}{Instance NG\_400\_2500\_c50}
        \label{tab:instance_sizes_400_2500_supp}
        \resizebox{0.5\textwidth}{!}{
            \begin{tabular}{lrrr}
            \toprule
            Stat & $|\mathcal{Y}_N|$ & $|\mathcal{Y}_S|$ & $|\mathcal{Y}_E|$ \\
            \midrule
            min & 359 & 65 & 23 \\
            max & 1223 & 145 & 42 \\
            mean & 723.33 & 106.33 & 29.60 \\
            \bottomrule
            \end{tabular}
        }
    \end{minipage}
    \hfill 
    \begin{minipage}[c]{0.5\textwidth}
        \centering
        
        \begin{tikzpicture}
            \begin{axis}[
                width=\linewidth,
                height=7cm,
                xlabel={}, 
                ylabel={},
                ymin=0.0, ymax=1.0,
                hide obscured x ticks=false,
                xtick={5, 10, 15, 20},
                xticklabels={$k=5$, $k=10$, $k=15$, $k=20$},
                grid=major,
                legend style={
                    at={(0.5, 1.03)}, 
                    anchor=south,    
                    legend columns=2, 
                    font=\tiny,
                    draw=none,
                    /tikz/every even column/.append style={column sep=0.3cm}
                },
                title style={yshift=1.5ex},
                title={}
            ]

            \addplot[name path=Smin, draw=none, forget plot] coordinates {
                (5, 0.9310) (10, 0.7837) (15, 0.6444) (20, 0.6866)
            };
            \addplot[name path=Smax, draw=none, forget plot] coordinates {
                (5, 0.9895) (10, 0.9623) (15, 0.9076) (20, 0.9375)
            };
            
            \addplot[blue!20, opacity=0.5, area legend] fill between[of=Smin and Smax];
            \addlegendentry{Range $\operatorname{RQR}^{\operatorname{CE}}_k(\mathcal{Y}_{S})$}

            \addplot[color=blue, thick, mark=*] coordinates {
                (5, 0.9559) (10, 0.8836) (15, 0.8219) (20, 0.8045)
            };
            \addlegendentry{Mean $\operatorname{RQR}^{\operatorname{CE}}_k(\mathcal{Y}_{S})$}

            \addplot[name path=ESmin, draw=none, forget plot] coordinates {
                (5, 0.5618) (10, 0.4731) (15, 0.3242) (20, 0.2365)
            };
            \addplot[name path=ESmax, draw=none, forget plot] coordinates {
                (5, 0.8063) (10, 0.7942) (15, 0.5773) (20, 0.5108)
            };
            
            \addplot[orange!30, opacity=0.5, area legend] fill between[of=ESmin and ESmax];
            \addlegendentry{Range $\operatorname{RQR}^{\operatorname{CE}}_k(\mathcal{Y}_{ES})$}

            \addplot[color=orange, thick, mark=square*] coordinates {
                (5, 0.7130) (10, 0.5692) (15, 0.4442) (20, 0.3485)
            };
            \addlegendentry{Mean $\operatorname{RQR}^{\operatorname{CE}}_k(\mathcal{Y}_{ES})$}

            \end{axis}
        \end{tikzpicture}
        \captionof{figure}{Comparison of RQR metrics}
        \label{fig:metrics_plot_400_2500_supp}
    \end{minipage}
\end{figure}

\begin{figure}[ht]
    \centering
    \begin{minipage}[c]{0.45\textwidth}
        \centering
        \captionof{table}{Quality Metrics over $k$}
        \label{tab:metrics_side_sp_supp_v2}
        \resizebox{1.1\textwidth}{!}{
        \tiny
        \begin{tabular}{l cc  cc  cc  cc}
            \toprule
            & \multicolumn{2}{c}{\textbf{RQR$_k^{\operatorname{HV}}$}} & \multicolumn{2}{c}{\textbf{RQR$_k^{\operatorname{CE}}$}} & \multicolumn{2}{c}{\textbf{RQR$_k^{I_\varepsilon}$}} & \multicolumn{2}{c}{\textbf{RQR$_k^{\operatorname{UN}}$}} \\
            \cmidrule(lr){2-3} \cmidrule(lr){4-5} \cmidrule(lr){6-7} \cmidrule(l){8-9}
            $k$ & $\mathbf{\mathcal{Y}_S}$ & $\mathbf{\mathcal{Y}_{ES}}$ & $\mathbf{\mathcal{Y}_S}$ & $\mathbf{\mathcal{Y}_{ES}}$ & $\mathbf{\mathcal{Y}_S}$ & $\mathbf{\mathcal{Y}_{ES}}$ & $\mathbf{\mathcal{Y}_S}$ & $\mathbf{\mathcal{Y}_{ES}}$ \\
            \midrule
            $5$  & 0.926 & 0.926 & 0.898 & 0.898 & 0.996 & 0.996 & 0.972 & 0.970 \\
            $10$ & 0.964 & 0.964 & 0.788 & 0.790 & 0.995 & 0.995 & 0.916 & 0.916 \\
            $15$ & 0.977 & 0.977 & 0.709 & 0.707 & 0.996 & 0.995 & 0.840 & 0.839 \\
            $20$ & 0.984 & 0.984 & 0.657 & 0.651 & 0.996 & 0.995 & 0.788 & 0.785 \\
            \midrule
            $\boldsymbol{\varnothing}$ & 0.963 & 0.963 & 0.763 & 0.762 & 0.996 & 0.995 & 0.879 & 0.878 \\
            \bottomrule
        \end{tabular}}
        \vspace{1.5em}
        \captionof{table}{Instance SP\_500\_1300\_c100}
        \label{tab:instance_sizes_sp_supp_v2}
        \resizebox{0.5\textwidth}{!}{
            \begin{tabular}{lrrr}
            \toprule
            Stat & $|\mathcal{Y}_N|$ & $|\mathcal{Y}_S|$ & $|\mathcal{Y}_E|$ \\
            \midrule
            min & 182 & 36 & 35 \\
            max & 531 & 58 & 58 \\
            mean & 343.20 & 45.53 & 44.87 \\
            \bottomrule
            \end{tabular}
        }
    \end{minipage}
    \hfill 
    \begin{minipage}[c]{0.5\textwidth}
        \centering
        
        \begin{tikzpicture}
            \begin{axis}[
                width=\linewidth,
                height=7cm,
                xlabel={}, 
                ylabel={},
                ymin=0.0, ymax=1.05, 
                hide obscured x ticks=false,
                xtick={5, 10, 15, 20},
                xticklabels={$k=5$, $k=10$, $k=15$, $k=20$},
                grid=major,
                legend style={
                    at={(0.5, 1.03)}, 
                    anchor=south,    
                    legend columns=2, 
                    font=\tiny,
                    draw=none,
                    /tikz/every even column/.append style={column sep=0.3cm}
                },
                title style={yshift=1.5ex},
                title={}
            ]

            \addplot[name path=Smin, draw=none, forget plot] coordinates {
                (5, 0.7762) (10, 0.6443) (15, 0.6311) (20, 0.5412)
            };
            \addplot[name path=Smax, draw=none, forget plot] coordinates {
                (5, 0.9575) (10, 0.9125) (15, 0.8679) (20, 0.9091)
            };
            \addplot[blue!20, opacity=0.5, area legend] fill between[of=Smin and Smax];
            \addlegendentry{Range $\operatorname{RQR}^{\operatorname{CE}}_k(\mathcal{Y}_{S})$}
            
            \addplot[color=blue, thick, mark=*] coordinates {
                (5, 0.8984) (10, 0.7876) (15, 0.7092) (20, 0.6569)
            };
            \addlegendentry{Mean $\operatorname{RQR}^{\operatorname{CE}}_k(\mathcal{Y}_{S})$}

            \addplot[name path=ESmin, draw=none, forget plot] coordinates {
                (5, 0.7735) (10, 0.6443) (15, 0.6108) (20, 0.4919)
            };
            \addplot[name path=ESmax, draw=none, forget plot] coordinates {
                (5, 0.9489) (10, 0.9312) (15, 0.8679) (20, 0.9091)
            };
            \addplot[orange!30, opacity=0.5, area legend] fill between[of=ESmin and ESmax];
            \addlegendentry{Range $\operatorname{RQR}^{\operatorname{CE}}_k(\mathcal{Y}_{ES})$}

            \addplot[color=orange, thick, mark=square*] coordinates {
                (5, 0.8981) (10, 0.7899) (15, 0.7072) (20, 0.6514)
            };
            \addlegendentry{Mean $\operatorname{RQR}^{\operatorname{CE}}_k(\mathcal{Y}_{ES})$}
            \end{axis}
        \end{tikzpicture}
        \captionof{figure}{Comparison of RQR metrics}
        \label{fig:metrics_plot_sp_supp_v2}
    \end{minipage}
\end{figure}

\subsection{Discussion of Results}

The numerical results confirm that selecting $k$ representative points from the restricted sets $\Y_S$ or $\Y_{ES}$ yields near-optimal representations, successfully bypassing the computational burden of generating the full set $\Y_N$.

\paragraph{Structural Differences in Optimality Retention}
The viability of the candidate set is strongly dictated by the underlying problem structure. For capacitated network flow problems (e.\,g., classes N07, N12, and NG\_400\_2500\_c50), a substantial performance gap exists. In the NG\_400\_2500\_c50 class (\Cref{tab:metrics_side_400_2500_supp}), the overall mean $\operatorname{RQR}^{\operatorname{CE}}$ is 0.871 for $\Y_S$, whereas it drops to 0.538 for $\Y_{ES}$. This confirms that for networks with higher arc capacities, extreme points alone suffer a severe loss of optimality. Conversely, for binary problems such as the shortest path instances (\Cref{tab:metrics_side_sp_supp_v2}), this structural gap vanishes. Both sets achieve nearly identical near-optimal $\operatorname{RQR}^{\operatorname{CE}}$ values (0.763 vs. 0.762), making the smaller set $\Y_{ES}$ a perfectly sufficient candidate pool in binary contexts.

\paragraph{Performance for Small Cardinalities}
For the capacitated class N12, selecting just $k=5$ points from $\Y_S$ yields a mean $\operatorname{RQR}^{\operatorname{CE}}$ of 0.954. This means the restricted subset retains over 95\% of the optimal coverage quality achievable by the optimal subset of $\Y_N$. While representation quality naturally degrades as $k$ increases to 20, $\Y_S$ maintains robust relative performance. In class N07, the $\operatorname{RQR}^{\operatorname{CE}}$ for $\Y_S$ at $k=20$ remains at 0.753, whereas $\Y_{ES}$ drops to 0.484, effectively rendering the extreme set inadequate for larger representation sizes in capacitated problems. 

\paragraph{The Computational Trade-off}
The most significant advantage of this strategy is the massive reduction in computational overhead. As shown in \Cref{tab:instance_sizes_n12}, the mean cardinality of the full set $\Y_N$ for class N12 is roughly 1,895 points, while $\Y_S$ contains an average of only 143 points. By restricting the subset selection to $\Y_S$, the runtime of the filtering step is drastically reduced, since the cardinality of the candidate set is part of its input. Because the RQR values for small $k$ consistently exceed 0.90 across most metrics, this strategy proves to be a highly favorable trade-off: it guarantees near-optimal representations while requiring only a fraction of the computational cost associated with traditional a posteriori filtering.

\section{Conclusion}\label{chapt:concl}

 The key finding of this study is the critical role of supported non-dominated points as high-quality representations of the complete non-dominated point set in network optimization problems. 
 Across various problem classes and test instances, supported solutions consistently demonstrate high representational quality, as measured by several different quality indicators. For all instances, the hypervolume ratio of the supported non-dominated points as representations is always close to one and provides minimal coverage errors. Notably, for binary problems, the set of extreme supported points is nearly identical to the set of supported points, implying that $\Y_{ES}$ alone already provides a high-quality representation. In contrast, as arc capacities increase in network flow problems, the two sets become structurally different, and extreme supported non-dominated points yield significantly lower representation quality than the set of supported points. Our findings highlight that the structural properties of the network, such as sparsity, arc capacity, and cost correlation, strongly influence the representation quality, with $\Y_S$ adapting much better across diverse scenarios. Thus, supported non-dominated points provide a robust representation even in high-dimensional or densely structured MOIMCF problems. 
 
However, their cardinality can still be large and may constitute a significant fraction of the complete non-dominated point set, especially in MOIMCF problems. Therefore, we additionally propose a fixed-size representation approach, addressing the practical requirement of decision makers for small, manageable subsets of solutions (e.\,g., $k=5$ to $20$). In contrast to classical \emph{a posteriori} approaches, where optimal solutions to subset selection problems are derived from the complete non-dominated set, we restrict the candidate pool to the set of (extreme) supported non-dominated points. To quantify the resulting loss in quality, we introduce the relative quality ratio (RQR). Our results show that selecting representative subsets from $\Y_S$ leads to only negligible losses in quality compared to selecting from $\Y_N$. Remarkably, for capacitated problems, subsets derived from $\Y_S$ achieve RQR values exceeding 0.95 for small values of $k$, whereas subsets based on $\Y_{ES}$ perform considerably worse.

Our results support an efficient two-phase strategy for multi-objective network optimization. In the first phase, the supported non-dominated set is computed, avoiding the computationally intractable generation of the complete non-dominated set. In the second phase, $\Y_S$ is filtered to determine an optimal subset of $k$ representative points with respect to a given quality indicator. The resulting approach substantially reduces computational effort while maintaining a representation quality that is close to that obtained from the full non-dominated set.

Future research could derive criteria for identifying when supported non-dominated points are necessary for high-quality representations or whether extreme supported non-dominated points are sufficient. Additionally, given that $\mathcal{Y}_{S}$ can still be large in certain instances, developing specialized algorithms that directly identify $k$ supported points with high-quality without enumerating the entire $\mathcal{Y}_{S}$ set remains a promising direction for algorithmic development.

Moreover, it could be explored whether a strategically chosen subset of the supported non-dominated points, potentially derived from intermediate solutions, could provide sufficiently high-quality representations while reducing computational effort.

\paragraph{Acknowledgments} We sincerely thank Gonçalo Lopes for providing access to his open-source library and for his highly responsive, helpful technical support. The authors thankfully acknowledge the financial support of the Deutsche Forschungsgemeinschaft (DFG, German Research Foundation), project number 441310140.

\paragraph{Data Availability Statement}

The test instances and implementation are available in the associated repository~\citep{koenen2026supprepresentations}.

\paragraph{Conflicts of Interest}

The authors declare no conflicts of interest.

\providecommand{\natexlab}[1]{#1}
	\providecommand{\url}[1]{{#1}}
	\providecommand{\urlprefix}{URL }
	\expandafter\ifx\csname urlstyle\endcsname\relax
	\providecommand{\doi}[1]{DOI~\discretionary{}{}{}#1}\else
	\providecommand{\doi}{DOI~\discretionary{}{}{}\begingroup
		\urlstyle{rm}\Url}\fi
	\providecommand{\eprint}[2][]{\url{#2}}

\bibliographystyle{apalike}

\end{document}